\newtheorem{theorem}{Theorem}
\newtheorem{lemma}{Lemma}
\newtheorem{corollary}{Corollary}
\newtheorem{proposition}{Proposition}
\newcommand{\pr}{\mathbb{P}}
\newcommand{\R}{\mathbb{R}}
\newcommand{\Z}{\mathbb{Z}}
\newcommand{\F}{\mathbb{F}}
\newcommand{\hX}{\hat{X}}
\newcommand{\hY}{\hat{Y}}
\newcommand{\hx}{\hat{x}}
\newcommand{\hy}{\hat{y}}
\newcommand{\hU}{\hat{U}}
\newcommand{\hpi}{\hat{\pi}}
\newcommand{\tX}{\tilde{X}}
\newcommand{\tY}{\tilde{Y}}
\newcommand{\tx}{\tilde{x}}
\newcommand{\ty}{\tilde{y}}
\newcommand{\bgamma}{\overline{\gamma}}
\newcommand{\bsigma}{\overline{\sigma}}
\DeclareMathOperator\erfc{erfc}
\begin{document}
%
\title{Time-Entanglement QKD: Secret Key Rates\\[-1.5ex] and  Information Reconciliation Coding}
%
%
%

\author{
        Joseph~J.~Boutros,~\IEEEmembership{Senior Member,~IEEE},
        and~Emina~Soljanin,~\IEEEmembership{Fellow,~IEEE}
\thanks{Joseph J. Boutros is with the Department
of Electrical and Computer Engineering, Texas A\& M University, 23874 Doha, Qatar, e-mail: boutros@ieee.org (see https://www.josephboutros.org).}
\thanks{Emina Soljanin is with the Department
of Electrical and Computer Engineering, Rutgers, the State University of New Jersey, Piscataway, NJ 08854, USA, e-mail: (see https://www.ece.rutgers.edu/emina-soljanin).}
\thanks{This  research  is  based  upon  work  supported  by  the  National  Science  Foundation  under  Grant  \# FET-2007203}}

\markboth{For publication in the IEEE, 30 December 2022}%
{}
\maketitle

\begin{abstract}
In time entanglement-based quantum key distribution (QKD), Alice and Bob extract the raw key bits from the (identical) arrival times of entangled photon pairs by time-binning. Each of them individually discretizes time into bins and groups them into frames. They retain only the frames with a single occupied bin. Thus, Alice and Bob can use the position of the occupied bin within a frame to generate random key bits, as in PPM modulation. Because of entanglement, their occupied bins and their keys should be identical. However, practical photon detectors suffer from time jitter errors. These errors cause discrepancies between Alice's and Bob's keys. Alice sends information to Bob through the public channel to reconcile the keys. The amount of information determines the secret key rate. This paper computes the secret key rates possible with detector jitter errors and constructs codes for information reconciliation to approach these rates.
\end{abstract}
\begin{IEEEkeywords}
Quantum key distribution, secret key rates, mutual information, time entanglement, time binning, jitter errors, soft-decision decoding.
\end{IEEEkeywords}

\section{Introduction}

Secret key distribution protocols establish a shared sequence of bits between two (or more) distant parties, Alice and Bob, in the presence of an eavesdropper, Eve. The key consists of uniformly random independent bits known only to Alice and Bob. Quantum Key Distribution (QKD) starts by communicating quantum states over a quantum channel. The role of the quantum step is to 1) ensure that no eavesdropping goes undetected and 2) provide a source of perfect randomness in the entanglement-based systems. 

There has been a significant effort to provide high key rates over long distances (see recent surveys \cite{Diamanti2016,qkdsurvey21}). QKD schemes based on time-entangled photons have emerged as a promising technique primarily because each entangled photon pair can carry multiple key bits and thus potentially provide a higher secure key rate over long distances \cite{lee2016highrate,Sarihan:19}. 

Time-entanglement-based QKD (TE-QKD) schemes use Spontaneous Parametric Down-Conversion (SPDC) to generate entangled photon pairs according to a Poisson Process. One of the photons goes to Alice, and the other to Bob. Therefore, Alice and Bob ideally detect their photons simultaneously with exponentially distributed photon inter-arrival times. The most common single-photon detectors are Superconducting Nanowire Single-Photon Detectors (SNSPDs), which exhibit properties closest to ideal sensors. They have low dark count rates, meaning they rarely report photon detection without a photon arrival.
Furthermore, they have low detector downtime $d$ and slight detector timing jitter that manifests as Gaussian noise with zero mean and variance $\sigma_d^2$. Unfortunately, these imperfections are non-negligible:  1) detector jitters and dark counts cause disagreements between Alice's and Bob's keys, and 2) the downtime introduces memory within the raw key bits. The secret key rate loss due to the non-ideal properties of these detectors has been studied most recently in \cite{QKD:chengBS22}.

At a high level, there are two main QKD steps. In the first step, Alice and Bob generate  \textit{raw key} bits using a quantum channel. 
Their respective raw keys may disagree at some positions, be partly known to Eve, and may not be uniformly random because of the aforementioned non-ideal detector properties. In the second step, Alice and Bob process the raw key to establish a shared {\it secret key}. They communicate through the public classical channel to reconcile differences between their raw keys, amplify the privacy of the key concerning Eve's knowledge, and compress their sequences to achieve uniform randomness. 
At the end of the protocol, Alice and Bob 1) have identical uniformly random (binary) sequences and 2) are confident the shared sequence is known only to them. Therefore the secret key is private and hard to guess. This paper focuses on the information reconciliation step. 

Alice and Bob obtain correlated streams of bits (raw keys) by detecting the arrival times of their entangled photons. However, they must communicate over a public channel to agree on a key, i.e., reconcile their differences. Here, we consider one-way information reconciliation schemes in which Alice sends information about her sequence to Bob, who uses it to remove the differences between his and Alice's raw keys. After the information reconciliation, Alice and Bob share Alice's initial raw key. However, the shared key is not secret because of the public channel communication. Alice and Bob perform privacy amplification to correct that, establishing secrecy but shortening the key. 
Since Alice and Bob base their secret key generation on correlated photon arrival times, they follow what is known as the source model in Information Theory \cite[Ch.~22.3]{books:GK2011}.
The secrecy capacity for this model when the eavesdropper has access to public communication but does not have correlated prior information
is equal to the mutual information between Alice's 
and Bob's observations (see, e.g., \cite[p.~567]{books:GK2011}). The secrecy capacity is an achievable upper bound on the post-privacy amplification rate.

Alice and Bob generate their secret keys from the correlated random photon arrivals. There are many ways to extract keys from this correlated information. One popular method is similar to Pulse Position Modulation (PPM); see, e.g., \cite{Zhong2015PhotonefficientQK} and references therein. (Some recently proposed adaptive schemes avoid discarding frames with multiple occupied bins \cite{schemes:ZhouW13,schemes:KarimiSW20}.) In PPM, Alice and Bob synchronize their clocks and discretize their timelines into time frames $N$ time bins. 
In PPM, Alice and Bob agree to retain only time frames in which they both detect a single photon arrival and discard all other frames. This single photon is said to occupy a time bin depending on where within the frame it arrives. 
Since photon inter-arrival times follow an exponential distribution, each bin is occupied independently of other bins. Therefore, the number of raw key bits that PPM decoding can extract from each frame equals $\log N$.

This paper focuses on practical photon detectors that suffer from time jitter errors. Since these errors cause discrepancies between Alice's and Bob's keys, Alice must send information to Bob through the public channel to reconcile the keys. The amount of information determines the secret key rate. This paper computes the secret key rates possible with detector jitter errors and constructs codes for information reconciliation to approach these rates.

This paper is organized as follows: Sec.~\ref{sec:notation} introduces notation and lists the paper's main contributions. Sec.~\ref{sec:Model} presents the TE-QKD channel model. Sec.~\ref{sec:ErrorRate} computes the rates of raw key disagreement caused by detection jitter, and Sec.~\ref{sec_soft_output} derives the correlations between Alice's and Bob's raw keys. Sec.~\ref{sec_mut_info} computes achievable information rates and the secrecy capacity of the TE-QKD channel.
Sec.~\ref{sec_codes} proposes and tests several coding schemes for information reconciliation.

\section{Notation and Main Contributions \label{sec:notation}}
The number $N$ of bins per time frame could be any positive integer greater than or equal to $2$, our propositions, lemmas, and theorems have no other constraint on $N$. However, our numerical examples are given for $N=2^m$, $m$ integer, $m \ge 1$. The set $\Z_N$ denotes the set of $N$ integers $\{0, 1, \dots, N-1\}$. The notation $\lfloor x \rfloor$, known as the \textit{floor of $x$} for $x \in \R$, is the largest integer smaller than or equal to $x$.

\noindent
Letters such as $X$, $Y$, $\tX$, and $\tY$ denote continuous random variables, while $\hX$ and $\hY$ are discrete random variables. Then, $p(\hy|\hx)$ denotes the conditional probability $\pr(\hY=\hy|\hX=\hx)$.
Also, $p(y|\hx)$ denotes the conditional density $p_{Y|\hX}(y|\hx)$.

\noindent
We use Bourbaki's notation for intervals on the real line, where $a$ and $b$ are two real numbers: 
the closed interval $[a,b]=\{ x \in \R ~:~ a \le x \le b \}$, the half-open intervals $[a,b[=[a,b] \setminus \{b\}$ and $]a,b]=[a,b] \setminus \{a\}$, and the open interval $]a,b[=[a,b] \setminus \{a, b\}$.

\noindent
We use the standard Bachmann-Landau big $\mathcal{O}$ notation:\\
The formal definition of $f(\sigma)=\mathcal{O}(g(\sigma))$ is: 
$\exists \alpha >0$, $\exists \sigma_0 >0$, $\forall \sigma<\sigma_0$, 
$|f(\sigma)|\le \alpha |g(\sigma)|$.\\
In this paper, an expression such as $1-\mathcal{O}(g(\sigma))$
or $1+\mathcal{O}(g(\sigma))$ implicitly assumes that $g(\sigma)>0$ 
in some open interval $]0,\sigma_0[$. Furthermore, we will frequently use $\gamma=1/\sigma^2$, a signal-to-noise ratio defined as the inverse of the jitter variance, then we could write $f(\gamma)=\mathcal{O}(g(\gamma))$ in a similar situation when $\gamma \rightarrow \infty$.\\
Two functions $f:\R \rightarrow \R$ and $g:\R \rightarrow \R$ are asymptotically equivalent if $\lim_{\gamma \rightarrow \infty} \frac{f(\gamma)}{g(\gamma)}=1$. In that case, we write $f(\gamma) \sim g(\gamma)$.

\noindent
The function $Q(x)=\frac{1}{2}\erfc(\frac{x}{\sqrt{2}})=\mathcal{O}(\exp(-x^2/2))$ is the Gaussian tail function.
Recall the definition $Q(x)=\int_x^{\infty} \phi(t) dt$, 
where $\phi(t)=\frac{1}{\sqrt{2\pi}} \exp(-t^2/2)$ is the standard normal density. Furthermore, we recall the binary entropy function, $H_2(x)=-x\log(x)-(1-x)\log(1-x)$, and the symmetric ternary entropy function, 
$H_3(x)=-(1-2x)\log(1-2x)-2x\log(x)$. 

The main contributions of this paper constitute a full characterization of the time-entanglement QKD channel, from information theory and coding theory point of view:
\begin{itemize}
\item We derive the error rates of the TE-QKD channel, and prove that the TE-QKD channel behaves like an $1/2$-diversity Nakagami fading channel, see Proposition~\ref{prop_pe_uncoded}.
\item We find the exact {\em a priori} probability of bins given that both Alice's and Bob's frames are valid, see Lemma~\ref{lem_U_tX}.
\item We establish the exact conditional density of Bob's photon position given Alice's photon bin, for a soft-output TE-QKD channel, see Theorem~\ref{th_Y_cond_hX_tY}. The output density expression is also  determined, see~(\ref{equ_pdfY_valid}). 
\item We determine the expression of the transition probabilities of the discrete (hard-output) TE-QKD channel, see Corollary~\ref{cor_pij}.
\item We give the exact expression of the {\em a posteriori} probability for the soft-output TE-QKD channel, see Theorem~\ref{th_app}.
\item We derive the exact formula for the mutual information $I(\hX;\hY)$ (hard-output) and find simplified expressions in the small-noise regime, see (\ref{equ_I_hX_hY}), (\ref{equ_approxI_discrete}), (\ref{equ_approx_discrete_circular}), and Proposition~\ref{prop_high_snr}-c.
\item The exact formula for the mutual information $I(\hX;Y)$ (soft-output) is given, see (\ref{equ_I_hX_Y}). We also determine all densities needed to compute the maximal rate $I(X;Y)$ and we give
a nice log-formula expression in the small-noise regime, see Theorem~\ref{th_capa} and Corollary~\ref{cor_log_formula}.
\item The last section, Section~\ref{sec_codes}, shows new results with huge coding gains obtained by short and moderate-length error-correcting codes such as RS, BCH, and LDPC codes under algebraic hard-decision decoding and probabilistic soft-decision decoding.  
\end{itemize}
\section{PPM Channel Model \label{sec:Model}}
Let $\tX$ and $\tY$ represent the time-position of the received photons at Alice's and Bob's sides, respectively. An illustration of this QKD scheme is given in Figure~\ref{fig_qkd_scheme}. 
\begin{figure}[hbt]
   \centering
   \includegraphics[width=12cm]{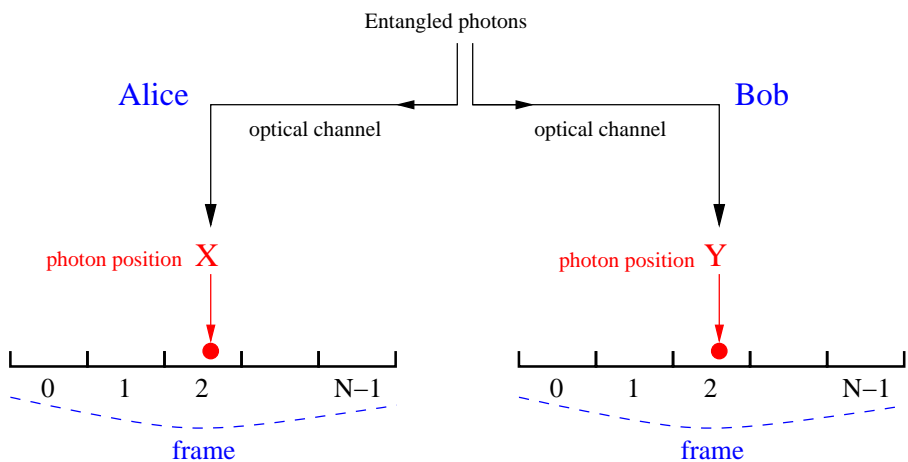}
   \caption{QKD based on time entanglement with $N$ bins per frame, $\log_2(N)$ binary digits per bin.}
\label{fig_qkd_scheme}
\end{figure}

We adopt the following mathematical model for the positions of two time-entangled photons: \begin{equation}
\label{equ_XYmodel}
\tX=U+Z_1,~~~\tY=U+Z_2,
\end{equation}
where $Z_1$ and $Z_2$ are independent identically distributed
$\mathcal{N}(0,\sigma^2)$ additive Gaussian noises modeling the detection jitter. $U$ is a real uniform random variable
in the interval $[0, N[$, where the integer $N=2^m$ is the number of bins per frame, and $m$ is the number of bits per photon. Alice and Bob communicate via a public channel and agree on a valid frame when $\tX$ and $\tY$ fall in the interval $[0, N[$. They reject empty frames and frames with more than one received photon. Under the model defined by (\ref{equ_XYmodel}), the probability of a frame 
to be valid for both Alice and Bob is $\pr(\tX,\tY \in [0,N[)$. 
Let $X$ and $Y$ denote the instances of $\tX$ and $\tY$ within the interval $[0, N[$, and let $\hX$ and $\hY$ be the bin number inside a frame, i.e.,
\begin{align}
 \tX=X,~~\text{for}~\tX \in [0,N[, & ~~~~~~\tY=Y,~~\text{for}~\tY \in [0,N[,\\
\hX=\lfloor X \rfloor \in \Z_N, & ~~~~~~
\hY=\lfloor Y \rfloor \in \Z_N.
\end{align}
From an information theoretical perspective, we distinguish two communication channels between Alice and Bob: (a) an algebraic (hard) output channel, (b) a real (soft) output channel, both having a discrete $N$-ary input $\hX$ as shown in Figure~\ref{fig_inf_theo_channels}. 

\begin{figure}[hbt]
   \centering
   \includegraphics[width=13cm]{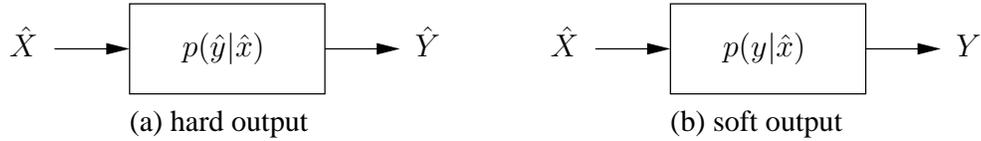}
   \caption{Channel models for hard-decision decoding (a) and soft-decision decoding (b).}
\label{fig_inf_theo_channels}
\end{figure}

Without error-correcting codes, the information rate on these channels 
is $\log_2(N)=m$ bits per channel use (bpcu). 
The main channel parameter $\gamma$ is a signal-to-noise ratio parameter (SNR) defined as
\begin{equation}
\gamma=\frac{E_s}{\sigma^2}=\frac{1}{\sigma^2}, 
\end{equation}
where the average energy per symbol $E_s=1$ is a normalized energy cost 
per transmitted photon. Another QKD channel parameter is $\bgamma$, referred 
to as the normalized signal-to-noise ratio, where the standard deviation of the additive Gaussian noise is normalized by the frame length $N$, hence its definition is
\begin{equation}
\bgamma=\frac{1}{(\sigma/N)^2}=\frac{N^2}{\sigma^2}, 
~~~\bgamma(dB)=\gamma(dB)+20\log_{10}(N).  
\end{equation}

We express the probability of error and the information rate as functions of $N$ and the SNR $\gamma$ or the normalized SNR $\bgamma$. The bin width within a frame is set to $1$ to simplify the analysis, i.e., the frame width is $N$ in all sections except for Section~\ref{sec_capacity}. The conversion of this mathematical model into a physical model representing a laboratory experiment is straightforward after introducing a time scale to convert $\gamma$ and $N$ into physical parameters. In Section~\ref{sec_capacity}, the number of bins is infinite (it's a continuum of bins), the frame has a unit length and $\gamma=\bgamma$ in that special QKD channel with both soft input and soft output. 

\section{Rate of Raw Key Disagreement Under Detection Jitter \label{sec:ErrorRate}}
We consider the probability of error $P_e(\gamma)=\pr(\hX \ne \hY)$. 
The probability $P_e$ characterizes the quality of channel~(a) in Figure~\ref{fig_inf_theo_channels} defined by its transition probabilities $p(\hy|\hx)$.
The latter will be entirely determined in Section~\ref{sec_mut_info}. In the current section, 
we are interested in determining the expression of $P_e(\gamma)$ as a function of the signal-to-noise ratio $\gamma$, for a given number of bins $N$ per frame.\\

Let $\pi_i=\pr(\hX=i)$, $i \in \Z_N$, be the {\em a priori} probability of the unique frame photon to fall in bin number $i$. Then, the exact expression of the probability of error is
\begin{equation}
\label{equ_pe_1}
P_e(\gamma)=\sum_{i=0}^{N-1} \pi_i \sum_{\substack{j=0\\j \ne i}}^{N-1} p(\hy=j|\hx=i)
=\frac{1}{N} \sum_{i=0}^{N-1} \pr(\hY \ne \hX | \hU=i),
\end{equation}
where $\hU=\lfloor U \rfloor$. Since $U$ is uniform in $[0,N)$, 
we get $\pr(\hU=i)=\pr(U \in [i,i+1))=\frac{1}{N}$ which explains the factor in the last equality above. As a first step, in the current section, we solve $P_e(\gamma)$ from the most right equality in~(\ref{equ_pe_1}) via the conditioning over $\hU$. To avoid cumbersome expressions, exact expressions as established in Sections~\ref{sec_soft_output}\&\ref{sec_mut_info}, we assume that $\gamma$ is large enough ($\sigma^2$ is small enough) so we can neglect the border effects in the frame. Hence, we make no difference here between $\tX$ and $X$ (resp. $\tY$ and $Y$), 
and we use the approximation that both $X$ and $Y$ are i.i.d. Gaussian when conditioning on $U$. 

\begin{proposition}
\label{prop_pe_uncoded}
The probability of symbol error $P_e(\gamma)=\pr(\hX \ne \hY)$ 
as a function of the SNR $\gamma$ and the number $N$ of bins per frame is given by the expression
\begin{equation}
\label{equ_pe_2}
P_e(\gamma)=\frac{2}{\sqrt{\pi}} \times \left( 1-\frac{1}{N}\right) \times \gamma^{-\frac{1}{2}}
+ \mathcal{O}(\exp(-\frac{\gamma}{4})).     
\end{equation}
\end{proposition}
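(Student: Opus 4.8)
The plan is to evaluate the right-most expression in~(\ref{equ_pe_1}) by conditioning on $U=i+f$ with $f$ uniform on $[0,1[$; under the stated small-$\sigma$ approximation, $X=i+f+Z_1$ and $Y=i+f+Z_2$ are then conditionally independent $\mathcal{N}(i+f,\sigma^2)$ variables. The essential reformulation is that $\hX\neq\hY$ holds precisely when the half-open interval delimited by $X$ and $Y$ contains a bin boundary, i.e.\ an integer. Writing $D$ for the number of such integers, the conditional error probability equals $\pr(D\ge 1)$, and the whole task is to extract its leading order in $\sigma=\gamma^{-1/2}$.

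For an interior bin I would exploit the exact lattice identity $\int_0^1 D\,df=|Z_1-Z_2|$, which holds for every fixed noise pair because $\int_0^1\lfloor f+a\rfloor\,df=a$. Averaging over the noise gives $\mathbb{E}[D]=\mathbb{E}|Z_1-Z_2|$, and since $Z_1-Z_2\sim\mathcal{N}(0,2\sigma^2)$ this equals $\sqrt{4\sigma^2/\pi}=\tfrac{2}{\sqrt{\pi}}\gamma^{-1/2}$. To pass from $\mathbb{E}[D]$ to $\pr(D\ge1)$ I would write $\pr(D\ge1)=\mathbb{E}[D]-\mathbb{E}[(D-1)^+]$ and bound the correction: the event $D\ge 2$ forces $|Z_1-Z_2|\ge 1$, which has probability $2Q(1/(\sigma\sqrt{2}))=\mathcal{O}(\exp(-\gamma/4))$, and the elementary bound $D\le |Z_1-Z_2|+1$ keeps $\mathbb{E}[(D-1)^+]=\mathcal{O}(\exp(-\gamma/4))$. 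Hence each of the $N-2$ interior bins contributes $\tfrac{2}{\sqrt{\pi}}\gamma^{-1/2}+\mathcal{O}(\exp(-\gamma/4))$.

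The factor $1-1/N$ is produced by the two edge bins $i=0$ and $i=N-1$. There only one of the two adjacent boundaries is internal: a crossing of the outer frame edge yields an out-of-frame detection, whereupon the frame is declared invalid and discarded, so it never registers as a key disagreement. Repeating the interval-counting argument with a single admissible boundary halves the integral and gives $\tfrac{1}{\sqrt{\pi}}\gamma^{-1/2}+\mathcal{O}(\exp(-\gamma/4))$ per edge bin. Substituting the $N-2$ interior and the $2$ edge contributions into~(\ref{equ_pe_1}) and dividing by $N$ yields $\tfrac1N[(N-2)\tfrac{2}{\sqrt{\pi}}+2\tfrac{1}{\sqrt{\pi}}]\gamma^{-1/2}=\tfrac{2}{\sqrt{\pi}}(1-\tfrac1N)\gamma^{-1/2}$, up to $\mathcal{O}(\exp(-\gamma/4))$.

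I expect the delicate point to be the constant, not the structure. The tempting shortcut of replacing the one-boundary straddle probability by its leading Gaussian tail $2Q(\cdot)$ over-counts the mass near the boundary (where the discarded factor $1-Q$ is of order one) and produces $\tfrac{2\sqrt{2}}{\sqrt{\pi}}\gamma^{-1/2}$, too large by a factor $\sqrt{2}$; one must either retain the quadratic term $2Q(1-Q)$ or, as above, bypass the issue entirely through the exact identity $\mathbb{E}|Z_1-Z_2|$. The secondary care point is confirming that, within the stated border-neglecting approximation, both the reduction from the expected crossing count to $\pr(D\ge1)$ and the discarding of out-of-frame detections perturb the result only at the announced order $\mathcal{O}(\exp(-\gamma/4))$.
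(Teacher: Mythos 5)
Your proposal is correct and reaches (\ref{equ_pe_2}) by a genuinely different route than the paper. The paper works with the stay/left/right probabilities $p_1,p_2,p_3$ of (\ref{equ_p1_prop1})--(\ref{equ_p3_prop1}), reduces $P_e$ to the integrals $I_1=\int_0^1 Q(v/\sigma)\,dv$, $I_2=\int_0^1 Q^2(v/\sigma)\,dv$, $I_3=\int_0^1 Q(v/\sigma)Q((1-v)/\sigma)\,dv$, and evaluates $I_1,I_2$ exactly by integration by parts; the constant $\tfrac{1}{2\sqrt{\pi}}$ in $\int_0^1 p_1p_2\,dv=I_1-I_2-I_3$ then emerges from the cancellation $\tfrac{1}{\sqrt{2\pi}}-\tfrac{\sqrt{2}-1}{2\sqrt{\pi}}$. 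Your crossing-count argument replaces this $Q$-calculus with the lattice identity $\int_0^1 D\,df=|Z_1-Z_2|$, so the leading constant is exposed at once as $\mathbb{E}|Z_1-Z_2|=\tfrac{2}{\sqrt{\pi}}\gamma^{-1/2}$ for $Z_1-Z_2\sim\mathcal{N}(0,2\sigma^2)$, and the passage from $\mathbb{E}[D]$ to $\pr(D\ge 1)$ costs only $\mathcal{O}(\exp(-\gamma/4))$, coming from the same double-crossing events ($|Z_1-Z_2|\ge 1$) that the paper buries in $I_3$. Your diagnosis of the $\sqrt{2}$ pitfall is also exactly right: in the paper's language it is the statement that $I_2$ is not negligible relative to $I_1$. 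What the paper's heavier calculus buys is reusability: the antiderivatives (\ref{equ_antider})--(\ref{equ_antider2}) and the values of $I_1,I_2,I_3$ are the workhorses of Lemma~\ref{lem_Q_minus_Q}, Proposition~\ref{prop_high_snr}, and the circular approximation (\ref{equ_approx_discrete_circular}), where leading-order identities alone would not suffice; your route is cleaner but produces only the asymptotics of this one proposition.

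One step you should tighten: for the edge bins, the claim that restricting to ``a single admissible boundary halves the integral'' is not a pathwise identity. For fixed $(Z_1,Z_2)$ with $|Z_1|,|Z_2|<1$, the $f$-measure of configurations straddling the single interior boundary of bin $i=0$ equals $\bigl(\max(Z_1,Z_2)\bigr)^{+}-\bigl(\min(Z_1,Z_2)\bigr)^{+}$, not $|Z_1-Z_2|/2$ (it vanishes, for instance, when both noises are negative). The halving is valid only after averaging over the noise: by the symmetry $(Z_1,Z_2)\mapsto(-Z_1,-Z_2)$ one gets $\mathbb{E}\bigl[(\max Z)^{+}-(\min Z)^{+}\bigr]=\tfrac12\mathbb{E}[\max Z-\min Z]=\tfrac12\mathbb{E}|Z_1-Z_2|$; equivalently, the two nearest boundaries of an interior bin have equal crossing probability while all farther ones contribute $\mathcal{O}(\exp(-\gamma/2))$. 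With that repair, your edge-bin contribution $\tfrac{1}{\sqrt{\pi}}\gamma^{-1/2}$ and the assembly $\tfrac{1}{N}\bigl[(N-2)\tfrac{2}{\sqrt{\pi}}+\tfrac{2}{\sqrt{\pi}}\bigr]\gamma^{-1/2}=\tfrac{2}{\sqrt{\pi}}\bigl(1-\tfrac{1}{N}\bigr)\gamma^{-1/2}$ match the paper's result, error term included.
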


\begin{proof}
Set $V=U-\hU$, so $V$ is $\text{Uniform}[0,1]$.
Let $p(i \rightarrow j|v)$ be the probability of falling in bin $j$ given that $\hU=i$ and $V=v$, where $i,j \in \Z_N$.
A symbol error occurs if $X=U+Z_1$ remains in bin $i$ but
$Y=U+Z_2$ leaves to bin $j$, $j \ne i$. The probability of such an event is $p(i \rightarrow i|v) \times p(i \rightarrow j|v)$, given that both additive Gaussian noises $Z_1$ and $Z_2$ are independent. 
Also, an error occurs if both $X$ and $Y$ leaves to two different bins $\ell$ and $j$,
with probability $p(i \rightarrow \ell|v) \times p(i \rightarrow j|v)$. 
Then, the conditional symbol error probability becomes
\[ 
P_e(i, v) = 2 \left[ \sum_{\substack{j=0\\ j\ne i}}^{N-1}
p(i \rightarrow i|v) p(i \rightarrow j|v) + 
\sum_{\substack{\ell=0\\ \ell\ne i}}^{N-1} 
\sum_{\substack{j=0\\ j\ne i, j\ne \ell}}^{N-1} p(i \rightarrow \ell|v) p(i \rightarrow j|v) \right].
\]
The factor of $2$ is due to the symmetry if the two letters $X$ and $Y$ are switched. 
As illustrated in Figure~\ref{fig_uncoded_pes}, we will neglect bins beyond the left and the right bin. The neglected bins are at least at distance $1.0$ from the bin $\hU=i$. They correspond to a probability of error $Q(1/\sigma)=\mathcal{O}(\exp(-1/(2\sigma^2)))=\mathcal{O}(\exp(-\frac{\gamma}{2}))$.
To further simplify the notations, define $p_1$, $p_2$, and $p_3$,
where
\begin{equation}
\label{equ_p1_prop1}
p_1=p(i \rightarrow i|v)=1-Q\left(\frac{v}{\sigma}\right)-Q\left(\frac{1-v}{\sigma}\right),
\end{equation}
\begin{equation}
\label{equ_p2_prop1}
p_2=p(i \rightarrow i-1|v)=Q\left(\frac{v}{\sigma}\right),
\end{equation}
and 
\begin{equation}
\label{equ_p3_prop1}
p_3=p(i \rightarrow i+1|v)=Q\left(\frac{1-v}{\sigma}\right),
\end{equation}

we obtain
\[
P_e(i, v) = \mathcal{O}(\exp(-\frac{\gamma}{2})) ~+~
\left\lbrace
\begin{array}{l}
2[p_1p_2+p_1p_3+p_2p_3],~~\text{for}~i=1\dots N-1,\\
2p_1p_3,~~\text{for}~i=0,\\
2p_1p_2,~~\text{for}~i=N-1.
\end{array}
\right.
\]
\begin{figure}[hbt]
   \centering
   \includegraphics[width=12cm]{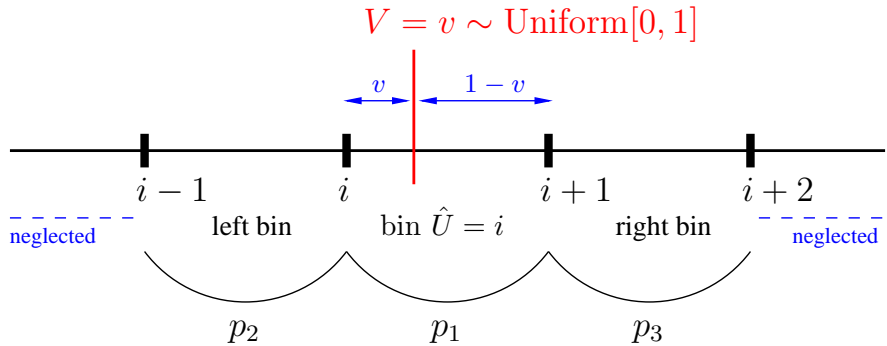}
   \caption{Illustration of the probability of error in bin position.}
\label{fig_uncoded_pes}
\end{figure}

Now, integrate over $v$, 
\[
\pr(\hY \ne \hX | \hU=i) = \int_0^1 P_e(i,v) ~dv.
\]
Then, apply (\ref{equ_pe_1}) and use $\int_0^1p_1p_2dv=\int_0^1 p_1p_3dv$ to finally reach
\begin{align}
P_e(\gamma) &=\frac{1}{N} \sum_{i=0}^{N-1} \int_0^1 P_e(i,v) ~dv \\
            &=4\left(1-\frac{1}{N}\right)\int_0^1 p_1p_2 ~dv + 2\left(1-\frac{1}{N}\right)\int_0^1 p_2p_3 ~dv ~+  \mathcal{O}(\exp(-\frac{\gamma}{2})).\label{equ_pe_p1p2_p2p3}      
\end{align}
The two integrals in (\ref{equ_pe_p1p2_p2p3}) include three types of integrals.
Let us process them step by step.
\[
I_1=\int_0^1 Q\left(\frac{v}{\sigma}\right)~dv = \frac{\sigma (1-e^{-1/2\sigma^2})}{\sqrt{2\pi}}+Q\left(\frac{1}{\sigma}\right)~=~\frac{\sigma}{\sqrt{2\pi}}+\mathcal{O}(\exp(-\frac{\gamma}{2})). 
\]
\begin{align*}
I_2 =\int_0^1 \left[ Q\left(\frac{v}{\sigma}\right) \right]^2 dv &= 
\frac{2\sqrt{2}\sigma-2\sigma(1-2Q(\sqrt{2}/\sigma))+4\sqrt{\pi}Q^2(1/\sigma)-4\sqrt{2}\sigma e^{-1/2\sigma^2}Q(1/\sigma)}{4\sqrt{\pi}}\\
&=\frac{(\sqrt{2}-1)\sigma}{2\sqrt{\pi}}+\mathcal{O}(\exp(-\gamma)). 
\end{align*}
\[
I_3=\int_0^1 Q\left(\frac{v}{\sigma}\right)~Q\left(\frac{1-v}{\sigma}\right)~dv \le \int_0^1 \exp\left( \frac{-v^2-(1-v)^2}{2\sigma^2}\right)~dv ~=~\mathcal{O}(\exp(-\frac{\gamma}{4})),
\]
since $v^2+(1-v)^2\ge \frac{1}{2}$ for $v \in [0,1]$. $I_1$ and $I_2$ were solved via integration by parts using the fact that $\frac{dQ(x)}{dx}=-\phi(x)$. 
$I_3$ has no simpler form. 
Our upper bound of $I_3$ brings a sufficient answer to the current proposition. 
After substituting $I_1$, $I_2$, and $I_3$ into (\ref{equ_pe_p1p2_p2p3}), we 
get (\ref{equ_pe_2}) as stated by the proposition, where $\sigma=\gamma^{-\frac{1}{2}}$. 
\end{proof}

The expression 
$\frac{2}{\sqrt{\pi}} \times \left( 1-\frac{1}{N}\right) \times \gamma^{-\frac{1}{2}}$ perfectly fits the Monte Carlo simulation
of $\pr(\hY \ne \hX)$ even for a signal-to-noise ratio as low as 20dB (error rate close to $10^{-1}$). 
Figure~\ref{fig_perf_uncoded} shows the plots of the probability of error $P_e(\gamma)$ for different number of bins per frames, from 1 bit per photon up to 4 bits per photon.
The plots of the probability of error versus the normalized SNR, $P_e(\bgamma)$,
are obtained from Figure~\ref{fig_perf_uncoded} after shifting right each curve by $20\log_{10}(N)$~decibels. 

\begin{figure}[hbt]
   \centering
   \includegraphics[width=9cm,angle=270]{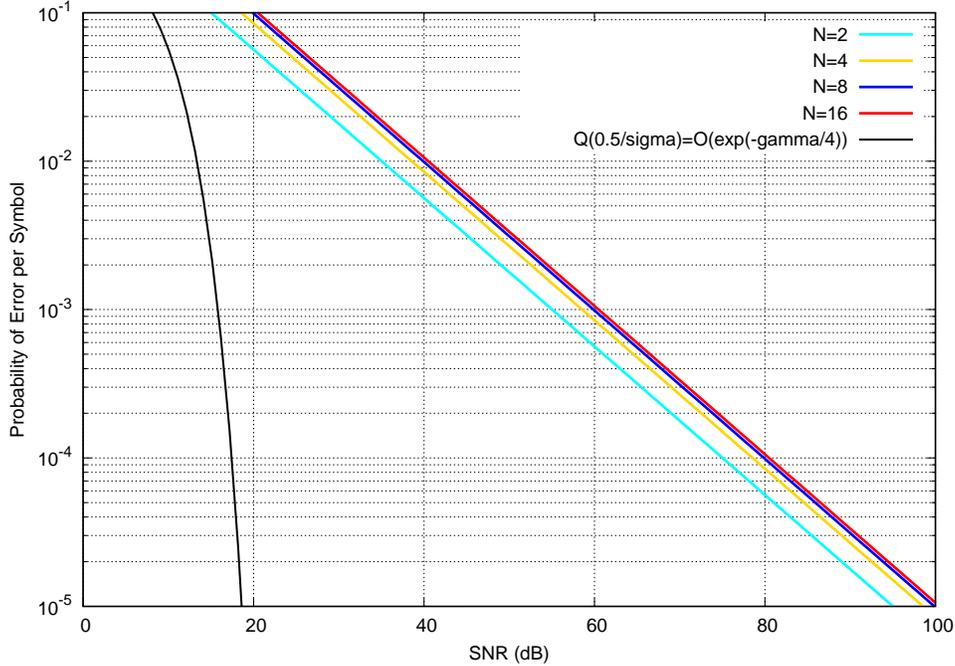}
   \caption{Probability of symbol error versus SNR, $\log_2(N)$ bits per photon, no coding.}
\label{fig_perf_uncoded}
\end{figure}

\section{Correlation Between Raw keys \label{sec_soft_output}}
The conditional densities of $\tX$ is directly derived from (\ref{equ_XYmodel}), 
\begin{equation}
\label{equ_condXt_U}
p_{\tX|U}(\tx|u) = \frac{1}{\sqrt{2\pi\sigma^2}} \exp\left(- \frac{(\tx-u)^2}{2\sigma^2}\right),~~~\tx \in \R. 
\end{equation}
Conditioned on $U=u$, $\tX$ and $\tY$ are independent. Then, after integrating (\ref{equ_condXt_U}), 
\[
\pr(\tX,\tY \in [0,N[|u)=\pr(\tX \in [0,N[|u)^2
=\left[ Q\left(-\frac{u}{\sigma}\right)-Q\left(\frac{N-u}{\sigma}\right) \right]^2. 
\]
So the probability of both Alice's and Bob's frames are valid is
\begin{align}
\pr(\tX,\tY \in [0,N[) &= 
\int_0^N  \left[ Q\left(-\frac{u}{\sigma}\right)-Q\left(\frac{N-u}{\sigma}\right) \right]^2~p_U(u)\, du \nonumber \\
&= \frac{1}{N}\int_0^N 
\left[ Q\left(-\frac{u}{\sigma}\right)-Q\left(\frac{N-u}{\sigma}\right) \right]^2\, du. \label{equ_tXtY_0N}
\end{align}
The density of $\tX$ is also derived by integrating over $u$, 
which is equivalent to convolving the densities of $U$ and $Z_1$, we get 
\begin{equation}
\label{equ_pdfXt}
p_{\tX}(\tx)=\int_0^N p_{\tX|U}(\tx|u) \cdot \frac{1}{N}~du = 
\frac{1}{N} \left[ Q\left(\frac{-\tx}{\sigma}\right)-Q\left(\frac{N-\tx}{\sigma}\right)\right],~~~\tx \in \R.
\end{equation}
Since $X$ is a version of $\tX$ truncated to the interval $[0,N[$, 
conditioning on $U+Z_1 \in [0,N[$, the density of $X$ is determined by scaling the density of $\tX$, namely 
\begin{equation}
\label{equ_pdfX_condU}
 p_{X|U}(x|u)=\frac{p_{\tX|U}(x|u)}{\int_0^N p_{\tX|U}(t|u) ~dt}
 =\frac{\frac{1}{\sqrt{2\pi\sigma^2}}
 \exp\left(-\frac{(x-u)^2}{2\sigma^2}\right)}
    {\left[ Q\left(-\frac{u}{\sigma}\right)-Q\left(\frac{N-u}{\sigma}\right) \right]},~~~~x,u \in [0,N[,  
\end{equation}
and
\begin{equation}
\label{equ_pdfX}
 p_X(x)=\frac{p_{\tX}(x)}{\int_0^N p_{\tX}(t) ~dt}
 =\frac{Q\left(-\frac{x}{\sigma}\right)-Q\left(\frac{N-x}{\sigma}\right)}
    {\int_0^N \left[ Q\left(-\frac{t}{\sigma}\right)-Q\left(\frac{N-t}{\sigma}\right) \right]~dt},~~~~x \in [0,N[. 
\end{equation}
By symmetry from (\ref{equ_XYmodel}), $p_{\tY|U}(\ty|u)$,  $p_{\tY}(\ty)$, $p_{Y|U}(y|u)$, and $p_Y(y)$ have expressions identical to (\ref{equ_condXt_U}), (\ref{equ_pdfXt}), (\ref{equ_pdfX_condU}), and~(\ref{equ_pdfX}) respectively, for $\ty \in \R$ and $y \in [0,N[$. 
The bins {\em a priori} probabilities $\pi_i=\pr(\hX=i)=\pr(X \in [i,i+1])$ become,
\begin{equation}
\label{equ_pi_i} 
    \pi_i=\pr(\hX=i)=\int_i^{i+1} p_X(x) ~dx = \frac{\int_i^{i+1} \left[Q\left(\frac{-x}{\sigma}\right)-Q\left(\frac{N-x}{\sigma}\right)\right] ~dx}
    {\int_0^N \left[ Q\left(-\frac{t}{\sigma}\right)-Q\left(\frac{N-t}{\sigma}\right) \right]~dt},~~~i \in \Z_N.
\end{equation}
At high SNR, for $\sigma^2 \ll 1$, we have $\pi_i \approx 1/N$, $\forall i$, because the truncation to the interval $[0,N[$ has less effect in the small-noise regime. Numerical examples are given in Table~\ref{table_pis}, for $N=8$ bins per frame. The entropy of $\hX$ is very stable, as listed in the last column of the table,
$H(\hX)=-\sum_{i=0}^{N-1} \pi_i \log_2(\pi_i) \approx \log_2(N)$ at low and high signal-to-noise ratios. 

\begin{table}[hbt]
\begin{center}
\caption{\label{table_pis} A priori probabilities of photon bins for $N=8$ bins per frame.}
\begin{tabular}{|c|c|c|}
\hline
SNR & $\pi_0, \dots, \pi_7$ & $H(\hX)$ (bits) \\ \hline \hline 
10~dB & {\small 0.112796, 0.129062, 0.129071, 0.129071, 0.129071, 0.129071, 0.129062, 0.112796} & 2.997655 \\ \hline
25~dB & {\small 0.122885, 0.125705, 0.125705, 0.125705, 0.125705, 0.125705, 0.125705, 0.122885} & 2.999931 \\ \hline 
40~dB & {\small 0.124626, 0.125125, 0.125125, 0.125125, 0.125125, 0.125125, 0.125125, 0.124626} & 2.999998 \\ \hline
\end{tabular}
\end{center}
\end{table}

The following lemma helps understand the analytic behavior of (\ref{equ_tXtY_0N})-(\ref{equ_pi_i}) at high SNR,
when $\sigma^2 \ll 1$.

\begin{lemma}
\label{lem_Q_minus_Q}
Let $f_{\sigma}(x)=Q\left(-\frac{x}{\sigma}\right)-Q\left(\frac{1-x}{\sigma}\right)$.
For $\sigma>0$ and $\gamma=1/\sigma^2$, 
given the properties of the Gaussian tail function $Q(x)$, the difference function $f_{\sigma}(x)$ satisfies\\
a) $\forall x \in \R, f_{\sigma}(x)=f_{\sigma}(1-x) \in ]0,1[$. Also, $f_{\sigma}(0)=f_{\sigma}(1)=\frac{1}{2}-\mathcal{O}(\exp(-\frac{\gamma}{2}))$.\\
b) For $x \in ]0,1[$, $f_{\sigma}(x)=1-\mathcal{O}(\exp(-\min^2(x,1-x)\cdot\frac{\gamma}{2}))$.\\
c) For $x<0$, we have $f_{\sigma}(x)=\mathcal{O}(\exp(-x^2\cdot\frac{\gamma}{2}))$,
and $f_{\sigma}(x)=\mathcal{O}(\exp(-(x-1)^2\cdot\frac{\gamma}{2}))$ for $x>1$.\\
d) Integrating $f_{\sigma}$ and $f_{\sigma}^2$, we get $\int_0^1 f_{\sigma}(x)~dx ~=~ 1- \sqrt{\frac{2}{\pi}}\cdot \frac{1}{\sqrt{\gamma}} + \mathcal{O}(\exp(-\frac{\gamma}{2})) ~=~1- \mathcal{O}(\frac{1}{\sqrt{\gamma}})$ 
and $\int_0^1 f_{\sigma}^2(x)~dx ~=~ 1- {\frac{1+\sqrt{2}}{\sqrt{\pi}}}\cdot \frac{1}{\sqrt{\gamma}} + \mathcal{O}(\exp(-\frac{\gamma}{4}))~=~1- \mathcal{O}(\frac{1}{\sqrt{\gamma}})$.\\
e) $\int_{i/N}^{(i+1)/N} f_{\sigma}(x)~dx=\frac{1}{N}-\mathcal{O}(\frac{1}{\sqrt{\gamma}})$ for $i=0$ and $i=N-1$ (the two extreme bins in a frame of $N$ bins).  
$\int_{i/N}^{(i+1)/N} f_{\sigma}(x)~dx=\frac{1}{N}+\mathcal{O}(\exp(-\beta\gamma))$ for $i=1 \ldots N-2$ (the inner bins), where the exponent constant is $\beta=\frac{1}{2}\min^2(\frac{i}{N}, 1-\frac{i+1}{N})$.
\end{lemma}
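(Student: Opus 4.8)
The plan is to reduce every item to a single rewriting: applying the tail identity $Q(-t)=1-Q(t)$ to the definition gives $f_\sigma(x)=1-Q(x/\sigma)-Q((1-x)/\sigma)$, which is nothing but the probability that an $\mathcal{N}(x,\sigma^2)$ variable lands in $[0,1]$. This observation makes (a)--(c) almost immediate. The range $f_\sigma(x)\in{}]0,1[$ follows because the Gaussian density is everywhere positive and normalized, with positive mass both inside and outside the unit interval. The symmetry $f_\sigma(x)=f_\sigma(1-x)$ comes from substituting $x\mapsto 1-x$ in the rewritten form, and $f_\sigma(0)=Q(0)-Q(1/\sigma)=\tfrac12-Q(\sqrt\gamma)=\tfrac12-\mathcal{O}(\exp(-\gamma/2))$. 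For (b) and (c) I would work from $1-f_\sigma(x)=Q(x/\sigma)+Q((1-x)/\sigma)$: when $x\in{}]0,1[$ both arguments are positive, each term is $\mathcal{O}(\exp(-x^2\gamma/2))$ and $\mathcal{O}(\exp(-(1-x)^2\gamma/2))$ respectively by $Q(t)=\mathcal{O}(\exp(-t^2/2))$, and the slower-decaying term (smaller exponent) dominates, giving the $\min^2(x,1-x)$ rate; for $x<0$ one bounds $f_\sigma(x)\le Q(-x/\sigma)=\mathcal{O}(\exp(-x^2\gamma/2))$ directly, and the case $x>1$ follows from the symmetry of (a).

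For (d) I would integrate the rewritten form termwise over $[0,1]$ and recycle the three integrals already evaluated in the proof of Proposition~\ref{prop_pe_uncoded}: with $I_1=\int_0^1 Q(v/\sigma)\,dv$, $I_2=\int_0^1 Q^2(v/\sigma)\,dv$, and $I_3=\int_0^1 Q(v/\sigma)Q((1-v)/\sigma)\,dv$, the change of variable $v\mapsto 1-v$ shows the two linear contributions coincide, so $\int_0^1 f_\sigma=1-2I_1=1-\sqrt{2/\pi}\,\sigma+\mathcal{O}(\exp(-\gamma/2))$. Squaring the rewritten form gives the structural identity $\int_0^1 f_\sigma^2=1-4I_1+2I_2+2I_3$, and substituting the known values of $I_1,I_2,I_3$ collapses the first-order terms to $-(1+\sqrt2)\sigma/\sqrt\pi$, with the $\mathcal{O}(\exp(-\gamma/4))$ coming from $I_3$ absorbing the smaller error terms; replacing $\sigma=\gamma^{-1/2}$ yields the stated formula.

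Part (e) is the main obstacle, and the key is to separate the two qualitatively different behaviours of the bins. On each bin I write $\int_{i/N}^{(i+1)/N} f_\sigma=\tfrac1N-\int_{i/N}^{(i+1)/N}[Q(x/\sigma)+Q((1-x)/\sigma)]\,dx$, so everything reduces to estimating partial integrals of $Q$. For an inner bin ($1\le i\le N-2$) both endpoints are bounded away from $0$ and $1$, so monotonicity of $Q$ gives $Q(x/\sigma)\le Q(\tfrac iN\sqrt\gamma)$ and $Q((1-x)/\sigma)\le Q((1-\tfrac{i+1}{N})\sqrt\gamma)$ on the whole bin; multiplying by the width $1/N$ and keeping the slower-decaying (nearest-boundary) term produces the exponentially small deviation with exponent $\beta=\tfrac12\min^2(\tfrac iN,1-\tfrac{i+1}{N})$. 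For an extreme bin, say $i=0$ on $[0,1/N]$, the contribution of $Q((1-x)/\sigma)$ is still exponentially small, but $\int_0^{1/N} Q(x/\sigma)\,dx=\sigma\int_0^{1/(N\sigma)} Q(t)\,dt=\tfrac{\sigma}{\sqrt{2\pi}}+\mathcal{O}(\sigma\exp(-\gamma/(2N^2)))$ contributes a genuine first-order $\mathcal{O}(1/\sqrt\gamma)$ correction, precisely because the bin touches the boundary where $f_\sigma$ has already dropped to $\tfrac12$; the case $i=N-1$ follows from the symmetry of (a). The only delicate point is to confirm that the tail $\int_{1/(N\sigma)}^\infty Q(t)\,dt=\phi(1/(N\sigma))-\tfrac{1}{N\sigma}Q(1/(N\sigma))$ is exponentially negligible against the $\sigma/\sqrt{2\pi}$ leading term, which is exactly the partial-integral-of-$Q$ estimate underlying the whole lemma.
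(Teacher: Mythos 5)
Your proposal is correct and follows essentially the same route as the paper's proof: the rewriting $f_{\sigma}(x)=1-Q(x/\sigma)-Q((1-x)/\sigma)$ with the probabilistic interpretation for (a), Chernoff-type bounds $Q(t)\le\tfrac{1}{2}e^{-t^2/2}$ for (b)--(c), the identities $\int_0^1 f_{\sigma}=1-2I_1$ and $\int_0^1 f_{\sigma}^2=1-4I_1+2I_2+2I_3$ reusing the integrals from Proposition~\ref{prop_pe_uncoded} for (d), and the extreme-versus-inner bin split reducing (e) to partial integrals of $Q$. The only (immaterial) deviation is that for the inner bins in (e) you bound the partial integrals by monotonicity of $Q$ rather than evaluating them exactly via the antiderivative $\int Q(ax)\,dx$ as the paper does; both yield the same exponent $\beta=\tfrac{1}{2}\min^2\bigl(\tfrac{i}{N},1-\tfrac{i+1}{N}\bigr)$.
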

\begin{proof}
For a), let $G$ be a standard normal random variable.
The finite interval $[-x,1-x]$ is never reduced to a single point. 
We get $f_{\sigma}(x)=\pr(G \in [-x,1-x]) \in ]0,1[$. 
Then, $f_{\sigma}(1-x)=Q\left(-\frac{(1-x)}{\sigma}\right) - Q\left(\frac{x}{\sigma}\right)=1-Q\left(\frac{1-x}{\sigma}\right)-1+Q\left(-\frac{x}{\sigma}\right)=f_{\sigma}(x)$, using the property $Q(-x)=1-Q(x)$. Finally $f_{\sigma}(0)=Q(0)-Q\left(\frac{1}{\sigma}\right)=\frac{1}{2}-\mathcal{O}(\exp(-\frac{\gamma}{2}))$.\\
For b), we write $f_{\sigma}(x)=1-Q\left(\frac{x}{\sigma}\right) - Q\left(\frac{1-x}{\sigma}\right)$. Then $Q\left(\frac{x}{\sigma}\right) + Q\left(\frac{1-x}{\sigma}\right) \le \frac{1}{2}\exp(-x^2/(2\sigma^2))+\frac{1}{2}\exp(-(1-x)^2/(2\sigma^2))\le \exp(-\min^2(x,1-x)\gamma/2)$ which yields the announced result. This inequality is only useful to us for $x\in ]0,1[$ to keep the exponential decay.\\
For c), $x<0$, so $1-x>-x>0$. 
Then $f_{\sigma}(x)\le Q\left(\frac{-x}{\sigma}\right) \le \frac{1}{2}\exp(-x^2/(2\sigma^2))=\mathcal{O}(\exp(-x^2\gamma/2))$. 
The proof is similar for $x>1$.\\
As mentioned for $I_1$ in the proof of Proposition~\ref{prop_pe_uncoded}, the anti-derivative of $Q(ax)$, $a,x \in \R$, is determined after integration by parts. We get 
\begin{equation}
\label{equ_antider}
\int Q(ax) dx = xQ(ax)-\frac{1}{\sqrt{2\pi a^2}}\exp(-a^2x^2/2)+c,
~~\text{where $c$ is the integration constant.}
\end{equation}
For d), $\int_0^1 f_{\sigma}(x)~dx = \int_0^1 \left[ 1-Q\left(\frac{x}{\sigma}\right)-Q\left(\frac{1-x}{\sigma}\right)\right] dx = 1-2I_1 = 1- \sqrt{\frac{2}{\pi}}\cdot \frac{1}{\sqrt{\gamma}} + \mathcal{O}(\exp(-\frac{\gamma}{2}))$,
where $I_1$ is solved thanks to (\ref{equ_antider}).\\  
As mentioned for $I_2$ in the proof of Proposition~\ref{prop_pe_uncoded}, the anti-derivative of $[Q(ax)]^2$, $a,x \in \R$, is also determined by integration by parts and the application of (\ref{equ_antider}). We get 
\begin{equation}
\label{equ_antider2}
\int Q^2(ax) dx = xQ^2(ax)-\sqrt{\frac{2}{\pi a^2}}Q(ax)\exp(-a^2x^2/2)
+\frac{1}{\sqrt{\pi a^2}}Q(ax\sqrt{2})+c.
\end{equation}
Then, $\int_0^1 f_{\sigma}^2(x) dx = 1-4I_1+2I_2+2I_3=1-4\times \frac{\sigma}{\sqrt{2\pi}}+2 \times \frac{\sqrt{2}-1}{2\sqrt{\pi}} \sigma + \mathcal{O}(\exp(-\frac{\gamma}{4}))$, where $I_2$ is solved thanks to (\ref{equ_antider2}) and $I_3=\mathcal{O}(\exp(-\frac{\gamma}{4}))$ as shown before. This completes the proof of d).\\
The proof of e) is mainly based on (\ref{equ_antider}), after taking care of the bin position within the frame.
We have 
\begin{align*}
I_4 &=\int_{i/N}^{(i+1)/N} f_{\sigma}(x)~dx=\int_{i/N}^{(i+1)/N} \left[ 1-Q\left(\frac{x}{\sigma}\right)-Q\left(\frac{1-x}{\sigma}\right)\right] dx\\ 
   &= \frac{1}{N}-\int_{i/N}^{(i+1)/N} Q\left(\frac{x}{\sigma}\right) dx - \int_{1-(i+1)/N}^{1-i/N} Q\left(\frac{x}{\sigma}\right) dx\\
   &= \frac{1}{N}-\left[ \frac{(i+1)}{N} Q\left(\frac{i+1}{N\sigma}\right)
   -\frac{i}{N} Q\left(\frac{i}{N\sigma}\right)
   -\frac{\sigma}{\sqrt{2\pi}} e^{—\frac{\gamma}{2}\frac{(i+1)^2}{N^2}}
   +\frac{\sigma}{\sqrt{2\pi}} e^{—\frac{\gamma}{2}\frac{i^2}{N^2}}
   \right] \\
   &-\left[ (1-\frac{i}{N}) Q\left(\frac{1-\frac{i}{N}}{\sigma}\right)
   -(1-\frac{i+1}{N}) Q\left(\frac{1-\frac{i+1}{N}}{\sigma}\right)
   -\frac{\sigma}{\sqrt{2\pi}} e^{—\frac{\gamma}{2}(1-\frac{i}{N})^2}
   +\frac{\sigma}{\sqrt{2\pi}} e^{—\frac{\gamma}{2}(1-\frac{i+1}{N})^2}
   \right].
\end{align*}
If $i=0$ or $i=N-1$, $I_4=\frac{1}{N}-\frac{\sigma}{\sqrt{2\pi}}=1-\mathcal{O}(\frac{1}{\sqrt{\gamma}})$, all terms with exponential decay are absorbed by the $\mathcal{O}(\frac{1}{\sqrt{\gamma}})$. 
For middle bins, $i=1 \ldots N-2$, $I_4=\frac{1}{N}+\mathcal{O}(e^{-\frac{\gamma}{2}\frac{i^2}{N^2}})+\mathcal{O}(e^{-\frac{\gamma}{2}(1-\frac{i+1}{N})^2})$, all terms of higher decay are absorbed by these two big $\mathcal{O}$. Hence, $I_4=\frac{1}{N}+\mathcal{O}(e^{-\beta\gamma})$, where 
the exponent constant is $\beta=\frac{1}{2}\min^2(i/N, 1-(i+1)/N)$.
\end{proof}
The convergence of $f_{\sigma}(x)$ is not uniform in the interval $[0,1]$. The point-wise convergence of $f_{\sigma}(x)$ to~$0$ (outside $[0,1]$) or to~$1$ (inside $[0,1]$) is very slow in the neighborhood of the points $x=0$ and $x=1$. At high SNR, the difference of the two $Q()$ functions behaves as a square function and its integral slowly approaches $1$ at a rate of $1/\sqrt{\gamma}$. 

Applying Lemma~\ref{lem_Q_minus_Q} to (\ref{equ_tXtY_0N})-(\ref{equ_pi_i}), after substituting $u/N$ to $u$ and $\sigma/N$ to $\sigma$, proves the following equalities
where $\tx, x, u \in ]0,N[$ and $i \in \Z_N$:  
\begin{align*}
\pr(\tX,\tY \in [0,N[) &= 1-\mathcal{O}(\frac{1}{\sqrt{N^2\cdot\gamma}})=1-\mathcal{O}(\frac{1}{\sqrt{\bgamma}}),\\
p_{\tX}(\tx) &= \frac{1}{N}-\mathcal{O}(\exp(-{\min}^2(\frac{\tx}{N},1-\frac{\tx}{N})\cdot\frac{\bgamma}{2})),\\
p_{X|U}(x|u) &= \frac{1}{\sqrt{2\pi\sigma^2}} \exp\left(-\frac{(x-u)^2}{2\sigma^2}\right)
\cdot (1+\mathcal{O}(\exp(-{\min}^2(\frac{u}{N},1-\frac{u}{N})\cdot\frac{\bgamma}{2})),\\
p_X(x) &= \frac{1}{N}\cdot(1-\mathcal{O}(\exp(-{\min}^2(\frac{x}{N},1-\frac{x}{N})\cdot\frac{\bgamma}{2}))\cdot(1+\mathcal{O}(\frac{1}{\sqrt{\bgamma}})),\\
\pi_i &= (\frac{1}{N}\pm \mathcal{O}(g(\bgamma)))\cdot(1+\mathcal{O}(\frac{1}{\sqrt{\bgamma}})), \end{align*}
where the vanishing rate of $g(\bgamma)$ depends on $i$ as stated by the Lemma.
The high SNR behavior of many expressions below
 could be determined via the application of the results listed in Lemma~\ref{lem_Q_minus_Q}.

To complete our analysis of the QKD channel between Alice and Bob, it is necessary to find the likelihoods $p_{Y|\hX}(y|\hx)$ and the transition probabilities $p_{\hY|\hX}(\hy|\hx)$ for the soft-output and the hard-output mathematical models illustrated in Figure~\ref{fig_inf_theo_channels}. We proceed in a similar manner as from (\ref{equ_condXt_U}) to (\ref{equ_pdfX}), by first integrating over $U$, then truncating over the interval $[0,N[$. 

\begin{lemma}
\label{lem_U_tX}
Given Alice's frame is valid, i.e. $\tX \in [0,N[$, the density of $U$ becomes
\begin{equation} 
\label{equ_U_tX}
p_{U|\tX \in [0,N[}(u) ~=~ \frac{Q\left(\frac{-u}{\sigma}\right)-Q\left(\frac{N-u}{\sigma}\right)}
{\int_0^N \left[ Q\left(\frac{-t}{\sigma}\right)-Q\left(\frac{N-t}{\sigma}\right) \right]~dt}
=p_{U|\tY \in [0,N[}(u),~~~u \in [0,N[,
\end{equation}
where $p_{U|\tY \in [0,N[}(u)$ is the density of $U$ given that Bob's frame is valid.
Furthermore, the {\em a priori} probabilities $\{ \hpi_i\}_{i=0}^{N-1}$ when both frames are valid are given by
\begin{equation}
\label{equ_hpi}
\hpi_i = \pr(\hX=i|\tY \in [0,N)) 
= \frac{\int_0^N \left[ Q\left(\frac{i-u}{\sigma}\right)-Q\left(\frac{i+1-u}{\sigma}\right) \right]\cdot
\left[ Q\left(\frac{-u}{\sigma}\right)-Q\left(\frac{N-u}{\sigma}\right) \right]\, du}
{\int_0^N \left[ Q\left(\frac{-u}{\sigma}\right)-Q\left(\frac{N-u}{\sigma}\right) \right]^2\, du.} 
\end{equation}
\end{lemma}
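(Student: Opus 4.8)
The plan is to derive both displayed formulas directly from the generative model (\ref{equ_XYmodel}) by Bayes' rule together with the conditional independence of $\tX$ and $\tY$ given $U$. The only genuine subtlety is interpreting the conditioning correctly: since $\hX=\lfloor X\rfloor$ is defined only when Alice's frame is valid, the event $\{\hX=i\}$ must be read as $\{\tX\in[i,i+1[\}$, which already forces $\tX\in[0,N[$; consequently the conditioning event ``$\tY\in[0,N[$'' in (\ref{equ_hpi}) is effectively the event that \emph{both} frames are valid.

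First I would establish (\ref{equ_U_tX}). By Bayes' rule,
\[
p_{U|\tX\in[0,N[}(u)=\frac{\pr(\tX\in[0,N[\mid U=u)\,p_U(u)}{\pr(\tX\in[0,N[)},\qquad u\in[0,N[.
\]
Since $\tX=u+Z_1$ with $Z_1\sim\mathcal{N}(0,\sigma^2)$, the event $\tX\in[0,N[$ is $Z_1/\sigma\in[-u/\sigma,(N-u)/\sigma[$, so $\pr(\tX\in[0,N[\mid U=u)=Q(-u/\sigma)-Q((N-u)/\sigma)$ from the definition $Q(x)=\pr(G>x)$ for a standard normal $G$. Using $p_U(u)=1/N$ and recognizing that the normalizer $\pr(\tX\in[0,N[)$ equals $\frac1N\int_0^N[Q(-t/\sigma)-Q((N-t)/\sigma)]\,dt$, the factors $1/N$ cancel and the first equality in (\ref{equ_U_tX}) follows. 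The second equality is immediate from the $\tX\leftrightarrow\tY$ symmetry of (\ref{equ_XYmodel}): because $Z_1$ and $Z_2$ are i.i.d., repeating the computation verbatim with $\tY=U+Z_2$ produces an identical expression, whence $p_{U|\tX\in[0,N[}=p_{U|\tY\in[0,N[}$.

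Next I would prove (\ref{equ_hpi}). Writing the conditional probability as a ratio of joint probabilities over the ``both valid'' event,
\[
\hpi_i=\frac{\pr(\tX\in[i,i+1[,\ \tY\in[0,N[)}{\pr(\tX\in[0,N[,\ \tY\in[0,N[)},
\]
I would condition on $U$ and exploit that $\tX$ and $\tY$ are independent given $U$, so each joint probability factorizes inside the integral over $u$. With $\pr(\tX\in[i,i+1[\mid U=u)=Q((i-u)/\sigma)-Q((i+1-u)/\sigma)$ and $\pr(\tY\in[0,N[\mid U=u)=Q(-u/\sigma)-Q((N-u)/\sigma)$, the numerator becomes $\frac1N\int_0^N$ of the product of these two factors, while the denominator is $\frac1N\int_0^N[Q(-u/\sigma)-Q((N-u)/\sigma)]^2\,du$, precisely the quantity already computed in (\ref{equ_tXtY_0N}). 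The common $1/N$ cancels, yielding (\ref{equ_hpi}).

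As a consistency check I would verify $\sum_{i=0}^{N-1}\hpi_i=1$: the sum $\sum_{i=0}^{N-1}[Q((i-u)/\sigma)-Q((i+1-u)/\sigma)]$ telescopes to $Q(-u/\sigma)-Q((N-u)/\sigma)$, so the summed numerator coincides with the denominator. I expect no analytic difficulty here, every step being elementary Gaussian bookkeeping; the main obstacle is purely conceptual, namely pinning down that the stated conditioning on Bob's validity must, because $\hX$ presupposes Alice's validity, be read as conditioning on the joint validity event. Getting the normalizing denominator right, the squared integral rather than the single integral, hinges entirely on this reading.
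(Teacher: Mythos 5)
Your proposal is correct and follows essentially the same route as the paper: the first display is obtained by the identical Bayes'-rule computation with $p_U(u)=1/N$ and the symmetry of $Z_1$ and $Z_2$, and the second by integrating products of conditional Gaussian-tail probabilities over $u$, exploiting the conditional independence of $\tX$ and $\tY$ given $U$. The only immaterial difference is organizational—the paper derives (\ref{equ_hpi}) by first forming the conditional density $p_{\tX|\tY\in[0,N[}$ from the first half of the lemma and then truncating and normalizing, whereas you write $\hpi_i$ directly as a ratio of joint probabilities—but after cancellation both yield the same two integrals, and your reading of the conditioning as joint validity of both frames (since $\hX=i$ presupposes $\tX\in[0,N[$) matches the paper's own treatment of this hidden assumption.
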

\begin{proof}
Let us apply Bayes' rule, while dropping the subscripts to simplify the notation:
\[
p(u|\tX \in [0,N[) = \frac{\pr(\tX \in [0,N)|u) \times p_U(u)}{\pr(\tX \in [0,N[)}. 
\]
From (\ref{equ_condXt_U}), we get $\pr(\tX \in [0,N[|u)=Q\left(\frac{-u}{\sigma}\right)-Q\left(\frac{N-u}{\sigma}\right)$.\\ 
From (\ref{equ_pdfXt}), we get $\pr(\tX \in [0,N[=\frac{1}{N}\int_0^N \left[ Q\left(\frac{-t}{\sigma}\right)-Q\left(\frac{N-t}{\sigma}\right) \right]~dt.$\\
Finally, plugging $p_U(u)=1/N$ leads to the result announced by the lemma in~(\ref{equ_U_tX}). The equality  
$p_{U|\tX \in [0,N[}(u)=p_{U|\tY \in [0,N[}(u)$ is the result of the symmetry between Alice and Bob in our model.\\

The {\em a priori} probability $\hpi_i$ is derived after establishing the density of $\tX$ conditioned on a valid frame for Bob, $\tY \in [0,N)$.
\begin{align}
p_{\tX|\tY \in [0,N)}(\tx) &= \int_0^N p_{\tX|U,\tY \in [0,N[}(\tx|u) \cdot p_{U|\tY \in [0,N[}(u) ~du \nonumber \\ 
&= \int_0^N p_{\tX|U}(\tx|u) \cdot p_{U|\tY \in [0,N[}(u) ~du, \label{equ_pdf_tX_cond_tY}
\end{align}
where the two factors are given by (\ref{equ_condXt_U}) and (\ref{equ_U_tX}) respectively. 
The {\em a priori} probability $\hpi_i=\pr(X \in [i,i+1)|\tY \in [0,N))$ becomes, for $i=0, \dots,N-1$, 
\begin{align*}
\hpi_i &= \int_i^{i+1} p_{X|\tY \in [0,N[}(x)~dx 
= \int_i^{i+1} \frac{p_{\tX|\tY \in [0,N[}(x)}{\int_0^N p_{\tX|\tY \in [0,N)}(\tx)~d\tx}~dx \\
&= \frac{\int_{x=i}^{i+1} \int_{u=0}^N p_{\tX|U}(x|u) \cdot p_{U|\tY \in [0,N[}(u) ~du ~dx}
{\int_{\tx=0}^N \int_{u=0}^N p_{\tX|U}(\tx|u) \cdot p_{U|\tY \in [0,N[}(u) ~du ~d\tx} \\
&= \frac{\int_{u=0}^N \left[ Q\left(\frac{i-u}{\sigma}\right)-Q\left(\frac{i+1-u}{\sigma}\right) \right] 
\cdot p_{U|\tY \in [0,N)}(u) ~du ~dx}
{\int_{u=0}^N \left[ Q\left(\frac{-u}{\sigma}\right)-Q\left(\frac{N-u}{\sigma}\right) \right] \cdot p_{U|\tY \in [0,N)}(u) ~du ~d\tx}. 
\end{align*}
We obtain (\ref{equ_hpi}) after replacing $p_{U|\tY \in [0,N[}(u)$ by its expression from (\ref{equ_U_tX}).
\end{proof}

Lemma~\ref{lem_U_tX} tells that invalidating the cases where the photon falls outside the frame converts the uniform density $p_U(u)=\frac{1}{N}$ 
into a non-uniform density in (\ref{equ_U_tX}). 
Furthermore, the {\em a priori} probability $\pi_i$ of (\ref{equ_pi_i}) becomes $\hpi_i$ of (\ref{equ_hpi}) when adding
the condition that Bob's frame is valid. $\pi_i$ and $\hpi_i$ already take into account
that Alice has a valid frame. 
The next lemma leads to establishing the channel likelihood expression. 
\begin{lemma}
\label{lem_U_cond_hX}
The conditional density of $U$ given $\hX=i$ is
\begin{equation}
\label{equ_u_hX}
p(u|\hX=i)= \frac{Q\left(\frac{i-u}{\sigma}\right)-Q\left(\frac{i+1-u}{\sigma}\right)}
{\int_0^{N} \left[ Q\left(\frac{i-t}{\sigma}\right)-Q\left(\frac{i+1-t}{\sigma}\right) \right]~dt},~~~u \in [0,N),     
\end{equation}
for $i=0, \dots, N-1$. Furthermore, when Bob gets a valid frame, the density of $U$ conditioned
on Alice's bin number $i$ is
\begin{equation}
\label{equ_u_hX_tY}
p(u|\hX=i,\tY \in [0,N))=
\frac{\left[ Q\left(\frac{i-u}{\sigma}\right)-Q\left(\frac{i+1-u}{\sigma}\right)\right] \cdot
\left[ Q\left(\frac{-u}{\sigma}\right)-Q\left(\frac{N-u}{\sigma}\right)\right]}
{\int_0^N \left[ Q\left(\frac{i-t}{\sigma}\right)-Q\left(\frac{i+1-t}{\sigma}\right)\right] \cdot
\left[ Q\left(\frac{-t}{\sigma}\right)-Q\left(\frac{N-t}{\sigma}\right)\right] ~dt}.
\end{equation}
\end{lemma}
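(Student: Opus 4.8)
The plan is to obtain both densities by a direct application of Bayes' rule, in exactly the same spirit as the derivation of $p_X$ in~(\ref{equ_pdfX}) and of $p_{U|\tX \in [0,N[}$ in Lemma~\ref{lem_U_tX}. For~(\ref{equ_u_hX}) I would first observe that the event $\{\hX=i\}$ is by definition the event $\{\tX \in [i,i+1)\}$, which already lies inside $[0,N[$; hence conditioning on $\hX=i$ carries the valid-frame information for Alice automatically, and the correct prior to start from is the untruncated uniform density $p_U(u)=1/N$ on $[0,N[$. Writing
\[
p(u|\hX=i) = \frac{\pr(\hX=i|U=u)\, p_U(u)}{\pr(\hX=i)},
\]
the only genuine computation is the likelihood $\pr(\hX=i|U=u)=\pr(\tX \in [i,i+1)|U=u)$. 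Using the conditional Gaussian density~(\ref{equ_condXt_U}) together with the identity $\int_a^b \phi = Q(a)-Q(b)$ for $a<b$, this integral collapses to $Q\left(\frac{i-u}{\sigma}\right)-Q\left(\frac{i+1-u}{\sigma}\right)$.

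Since $\pr(\hX=i)$ does not depend on $u$, it serves purely as a normalizing factor. Dividing the numerator $\left[Q\left(\frac{i-u}{\sigma}\right)-Q\left(\frac{i+1-u}{\sigma}\right)\right]\cdot\frac{1}{N}$ by its integral over $u \in [0,N[$ cancels the $1/N$ and yields~(\ref{equ_u_hX}) directly.

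For the joint conditioning in~(\ref{equ_u_hX_tY}) I would apply Bayes' rule with respect to both events $\hX=i$ and $\tY \in [0,N[$, so that
\[
p(u|\hX=i,\tY \in [0,N[) = \frac{\pr(\hX=i,\tY \in [0,N[\,|\,U=u)\, p_U(u)}{\pr(\hX=i,\tY \in [0,N[)}.
\]
The key step is to invoke the conditional independence of $\tX$ and $\tY$ given $U=u$ (stated just after~(\ref{equ_condXt_U})), which factors the joint likelihood into the bin term already computed above times Bob's valid-frame probability. The latter is, by the Alice--Bob symmetry of~(\ref{equ_XYmodel}) and the same integration that produced~(\ref{equ_condXt_U}), equal to $\pr(\tY \in [0,N[\,|\,U=u)=Q\left(\frac{-u}{\sigma}\right)-Q\left(\frac{N-u}{\sigma}\right)$. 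Again absorbing the $u$-independent denominator together with the $1/N$ into a single normalizing integral over $[0,N[$ produces~(\ref{equ_u_hX_tY}).

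The calculations are routine; the only points requiring care are the bookkeeping of the conditioning events — in particular recognizing that $\{\hX=i\}$ already encodes Alice's valid frame, so that the plain uniform prior, rather than the truncated density of Lemma~\ref{lem_U_tX}, is the right starting point — and correctly factoring the likelihood through conditional independence in the second part. No estimate from Lemma~\ref{lem_Q_minus_Q} is needed for the statement itself; that lemma would only enter later if one wished to extract the high-SNR behavior of these densities.
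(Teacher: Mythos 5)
Your proof is correct, and it takes a genuinely more direct route than the paper's. The paper proves (\ref{equ_u_hX}) by applying Bayes' rule inside the space already conditioned on Alice's valid frame: it decomposes $p(u|\hX=i)$ as $\pr(\hX=i|u,\tX\in[0,N))\cdot p(u|\tX\in[0,N))/\pi_i$, importing $p(u|\tX\in[0,N))$ from Lemma~\ref{lem_U_tX} and $\pi_i$ from (\ref{equ_pi_i}); the denominator then emerges as $\int_i^{i+1}\bigl[Q(-t/\sigma)-Q((N-t)/\sigma)\bigr]\,dt$, and the change-of-variable identity (\ref{equ_exercise}) is still required to reach the form stated in the lemma. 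The second density is obtained analogously via $p(u|\tX,\tY\in[0,N))$ in (\ref{equ_pdfU_valid}) and $\hpi_i$ in (\ref{equ_hpi}). You instead run Bayes' rule against the raw uniform prior $p_U(u)=1/N$, exploiting the same hidden assumption the paper flags (the event $\{\hX=i\}=\{\tX\in[i,i+1)\}$ already lies inside Alice's valid-frame event) plus, for (\ref{equ_u_hX_tY}), the conditional independence of $\tX$ and $\tY$ given $U$; the $u$-independent denominators are then fixed purely by normalization and come out immediately in the stated form. Your route buys brevity and self-containedness --- it needs neither Lemma~\ref{lem_U_tX}, nor $\pi_i$ and $\hpi_i$, nor the identity (\ref{equ_exercise}) --- while the paper's route doubles as a consistency check tying the lemma to the previously derived conditional densities and a priori probabilities, which it reuses downstream. (One cosmetic slip in your write-up: $\pr(\tY\in[0,N)|u)$ is obtained by integrating the density (\ref{equ_condXt_U}), which is not itself produced by an integration; the mathematics is unaffected.)
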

\begin{proof}
The existence of $X$ and $\hX$, e.g. when writing $\hX=i$, requires that $\tX \in [0,N)$.
This hidden assumption should not be forgotten. 
By applying Bayes' rule, 
\[
p(u|\hX=i) = p(u|\hX=i, \tX \in [0,N))  = \frac{\pr(\hX=i|u, \tX \in [0,N)) \times p(u|\tX \in [0,N))}{\pi_i}. 
\]
The first term in the numerator can be developed as follows
\begin{align*}
\pr(\hX=i|u, \tX \in [0,N) = \pr(X \in [i,i+1)| u, \tX \in [0,N)) & = 
\frac{\pr(\tX \in [i,i+1)|u)}{\pr(\tX \in [0,N)|u)}\\  & = \frac{Q\left(\frac{i-u}{\sigma}\right)-Q\left(\frac{i+1-u}{\sigma}\right)}
{Q\left(\frac{-u}{\sigma}\right)-Q\left(\frac{N-u}{\sigma}\right)}. 
\end{align*}
The second term in the numerator is given in (\ref{equ_U_tX}) in Lemma~\ref{lem_U_tX}. 
After substituting the expression of $\pi_i$ from (\ref{equ_pi_i}), we get 
\begin{equation*}
p(u|\hX=i)= \frac{Q\left(\frac{i-u}{\sigma}\right)-Q\left(\frac{i+1-u}{\sigma}\right)}
{\int_i^{i+1} \left[ Q\left(\frac{-t}{\sigma}\right)-Q\left(\frac{N-t}{\sigma}\right) \right] \,dt},     
\end{equation*}
The reader is invited to prove via a change of variable that
\begin{equation}
\label{equ_exercise}
    \int_i^{i+1} \left[ Q\left(\frac{-t}{\sigma}\right)-Q\left(\frac{N-t}{\sigma}\right)\right] \,dt ~=~ 
    \int_0^N \left[ Q\left(\frac{i-t}{\sigma}\right)-Q\left(\frac{i+1-t}{\sigma}\right) \right] \,dt.  
\end{equation}
which leads to the result announced by the lemma in (\ref{equ_u_hX}).\\
The proof of (\ref{equ_u_hX_tY}) follows similar steps as for the proof of (\ref{equ_u_hX}).
Firstly, using Bayes' rule and (\ref{equ_tXtY_0N}) we get a conditional density of $U$, 
\begin{equation}
\label{equ_pdfU_valid}
p(u|\tX,\tY \in [0,N))= \frac{\left[Q\left(\frac{-u}{\sigma}\right)-Q\left(\frac{N-u}{\sigma}\right)\right]^2}
{\int_0^N \left[Q\left(\frac{-t}{\sigma}\right)-Q\left(\frac{N-t}{\sigma}\right)\right]^2 \,dt}. 
\end{equation}
Secondly, we solve the conditional probability of Alice's photon bins,
\begin{align*}
\pr(\hX=i|u, \tX,\tY \in [0,N)) &=  
\frac{\pr(\tX\in [i,i+1),\tY \in [0,N)|u)}{\pr(\tX,\tY \in [0,N)|u)} \\
&= \frac{\left[Q\left(\frac{i-u}{\sigma}\right)-Q\left(\frac{i+1-u}{\sigma}\right)\right] \cdot
\left[Q\left(\frac{-u}{\sigma}\right)-Q\left(\frac{N-u}{\sigma}\right)\right]}
{\left[Q\left(\frac{-u}{\sigma}\right)-Q\left(\frac{N-u}{\sigma}\right)\right]^2}. 
\end{align*}
Finally, we use the above expressions of $\pr(\hX=i|u, \tX,\tY \in [0,N))$ and $p(u|\hX,\tY \in [0,N))$,
and (\ref{equ_hpi}) from Lemma~\ref{lem_U_tX} in
\[
p(u|\hX=i, \tY \in [0,N))  = \frac{\pr(\hX=i|u, \tX,\tY \in [0,N)) \times p(u|\tX,\tY \in [0,n))}{\hpi_i}
\]
to reach (\ref{equ_u_hX_tY}) in this lemma.
\end{proof}
The existence of $Y$ assumes that $\tY \in [0,N)$, as we mentioned for $X$
in the proof of Lemma~\ref{lem_U_cond_hX}. We deliberately remind the reader of the condition
$\tY \in [0,N)$ in the subscript of the likelihood function in the next statement. 

\begin{theorem}
\label{th_Y_cond_hX_tY}
Under the assumption that both Alice and Bob got valid frames, 
the soft-output QKD channel model likelihoods, $p(y|\hx)=p_{Y|\hX,\tY \in [0,N)}(y|\hx)$, 
have the following expression
\begin{equation}
\label{equ_Y_cond_hX_tY}
    p_{Y|\hX,\tY \in [0,N)}(y|\hx=i) ~=~\frac
    {\int_0^N \frac{1}{\sqrt{2\pi\sigma^2}} \exp\Bigl(-\frac{(y-u)^2}{2\sigma^2}\Bigr)\cdot \left[Q\left(\frac{i-u}{\sigma}\right)-Q\left(\frac{i+1-u}{\sigma}\right)\right] \,du}
    {\int_0^N \left[Q\left(\frac{-t}{\sigma}\right)-Q\left(\frac{N-t}{\sigma}\right)\right] \cdot \left[Q\left(\frac{i-t}{\sigma}\right)-Q\left(\frac{i+1-t}{\sigma}\right)\right] \,dt}, 
\end{equation}
for $i=0,\dots, N-1$, $y \in [0,N)$. For simplicity,
the likelihood in (\ref{equ_Y_cond_hX_tY}) will be denoted by $p(y|\hX=i)$ in next sections. 
\end{theorem}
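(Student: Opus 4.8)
The plan is to obtain the likelihood by marginalizing over the hidden uniform variable $U$ and then exploiting the conditional independence of Alice's and Bob's observations given $U$, exactly mirroring the passage from (\ref{equ_condXt_U}) to (\ref{equ_pdfX}) that the text signposts. The starting observation is that, for $y \in [0,N)$, imposing Bob's validity event $\tY \in [0,N)$ on the density of $\tY$ just truncates it to the frame and renormalizes it, so that
\[
p_{Y|\hX,\tY\in[0,N)}(y|i)=\frac{p_{\tY|\hX=i}(y)}{\int_0^N p_{\tY|\hX=i}(t)\,dt}.
\]
Hence the whole problem reduces to finding $p_{\tY|\hX=i}$, the density of Bob's untruncated photon position given Alice's bin number, and then normalizing over $[0,N)$.

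First I would write $p_{\tY|\hX=i}(y)=\int_0^N p_{\tY|U,\hX=i}(y|u)\,p(u|\hX=i)\,du$. The key step is the conditional independence: given $U=u$, the variable $\tY=u+Z_2$ depends only on $Z_2$, while $\hX$ is a function of $\tX=u+Z_1$ and $Z_1\perp Z_2$; therefore $p_{\tY|U,\hX=i}(y|u)=p_{\tY|U}(y|u)$, the plain Gaussian density of (\ref{equ_condXt_U}). The second factor $p(u|\hX=i)$ is supplied directly by Lemma~\ref{lem_U_cond_hX}, equation (\ref{equ_u_hX}). Substituting both produces an explicit integral over $u$ carrying the common, $y$-independent denominator $\int_0^N [Q(\frac{i-t}{\sigma})-Q(\frac{i+1-t}{\sigma})]\,dt$.

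Next I would evaluate the normalizer $\int_0^N p_{\tY|\hX=i}(t)\,dt$ by Fubini, swapping the order of the $t$- and $u$-integrations. The inner integral $\int_0^N \frac{1}{\sqrt{2\pi\sigma^2}}\exp(-\frac{(t-u)^2}{2\sigma^2})\,dt$ is recognized as $\pr(\tY\in[0,N)\mid u)=Q(\frac{-u}{\sigma})-Q(\frac{N-u}{\sigma})$, so the normalizer becomes $\int_0^N [Q(\frac{i-u}{\sigma})-Q(\frac{i+1-u}{\sigma})][Q(\frac{-u}{\sigma})-Q(\frac{N-u}{\sigma})]\,du$ over the same common denominator. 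Forming the ratio, that common factor cancels top and bottom, leaving precisely (\ref{equ_Y_cond_hX_tY}).

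The only real obstacle is bookkeeping with the two validity conditions: the event $\{\hX=i\}$ silently carries $\tX\in[0,N)$ (the hidden assumption flagged in Lemma~\ref{lem_U_cond_hX}), whereas $\{\tY\in[0,N)\}$ must be imposed separately on Bob's side. I would take care that the independence used in the marginalization is the independence of $\tY$ from $\hX$ given $U$ alone, unaffected by whether Bob's frame turns out valid, and that the truncation to $[0,N)$ is applied only at the very end, which is exactly what the ``integrate over $U$, then truncate'' recipe guarantees. An equivalent, slightly shorter route is to marginalize the truncated Gaussian against $p(u|\hX=i,\tY\in[0,N))$ from (\ref{equ_u_hX_tY}), in which case the factor $Q(\frac{-u}{\sigma})-Q(\frac{N-u}{\sigma})$ cancels immediately; either path terminates at (\ref{equ_Y_cond_hX_tY}).
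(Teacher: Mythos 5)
Your proposal is correct and terminates at exactly (\ref{equ_Y_cond_hX_tY}), but your primary route orders the conditioning differently from the paper. The paper conditions on Bob's validity from the outset: it writes $p(y|\hx=i,\tY\in[0,N))=\int_0^N p(y|u)\,p(u|\hx=i,\tY\in[0,N))\,du$, where $p(y|u)$ is the Gaussian of (\ref{equ_condXt_U}) truncated to $[0,N)$ and the second factor is the second half of Lemma~\ref{lem_U_cond_hX}, namely (\ref{equ_u_hX_tY}); the validity factor $Q\left(\frac{-u}{\sigma}\right)-Q\left(\frac{N-u}{\sigma}\right)$ then cancels pointwise inside the integrand, giving the result in one step. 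You instead condition only on $\hX=i$, marginalize the untruncated Gaussian against the first half of the lemma, (\ref{equ_u_hX}), to obtain $p_{\tY|\hX=i}$, and impose Bob's validity only at the very end as a truncation and renormalization, evaluating the normalizer by Fubini (the inner Gaussian integral reproduces $Q\left(\frac{-u}{\sigma}\right)-Q\left(\frac{N-u}{\sigma}\right)$), after which the common factor $\int_0^N \left[Q\left(\frac{i-t}{\sigma}\right)-Q\left(\frac{i+1-t}{\sigma}\right)\right]dt$ cancels in the ratio. Both arguments rest on the same conditional-independence fact ($\tY$ independent of $\hX$ given $U$, which you justify carefully) and on the same lemma, just different halves of it: your route needs only the simpler density (\ref{equ_u_hX}) at the cost of one extra Fubini swap, while the paper's is shorter because the normalization over Bob's valid frame is already built into (\ref{equ_u_hX_tY}). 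The alternative you flag in your last sentence---marginalizing the truncated Gaussian against (\ref{equ_u_hX_tY}) so that the validity factor cancels immediately---is precisely the paper's own proof.
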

\begin{proof}
We drop the subscripts in the density functions, when possible, to simplify the notations.
We start by a marginalization before truncating $p(\ty|u)$. 
\begin{align}
p(y|\hX=i, \tY \in [0,N)) &= \int_0^N p(y, u | \hX=i, \tY \in [0,N)) \,du \nonumber \\
&= \int_0^N p(y| u, \hX=i, \tY \in [0,N)) \cdot p(u|\hX=i, \tY \in [0,N)) \,du \nonumber \\
&= \int_0^N p(y| u) \cdot p(u|\hX=i, \tY \in [0,N)) \,du, \label{equ_Y_cond_hX_tY_temp}
\end{align}

The left factor $p(y | u)$ inside the integral in~(\ref{equ_Y_cond_hX_tY_temp}) is given by the truncation of the density in~(\ref{equ_condXt_U}) (replace $x$ by $y$) and the right factor was solved by Lemma~(\ref{lem_U_cond_hX}). 
\begin{align*}
p(y|\hX=i, \tY \in [0,N)) &= \int_0^N \frac{p(\ty=y|u)}{\int_0^N p(\ty|u) d\ty} \cdot p(u|\hX=i, \tY \in [0,N)) \,du,
\end{align*}
\begin{equation*}
= \int_0^N 
\frac{\frac{1}{\sqrt{2\pi\sigma^2}} \exp\Bigl(-\frac{(y-u)^2}{2\sigma^2}\Bigr)}{Q\left(\frac{-u}{\sigma}\right)-Q\left(\frac{N-u}{\sigma}\right)} 
\cdot 
\frac{\left[Q\left(\frac{i-u}{\sigma}\right)-Q\left(\frac{i+1-u}{\sigma}\right)\right] \cdot
\left[Q\left(\frac{-u}{\sigma}\right)-Q\left(\frac{N-u}{\sigma}\right)\right]}
{\int_0^N \left[Q\left(\frac{i-t}{\sigma}\right)-Q\left(\frac{i+1-t}{\sigma}\right)\right] \cdot
\left[Q\left(\frac{-t}{\sigma}\right)-Q\left(\frac{N-t}{\sigma}\right)\right] \,dt} \,du.
\end{equation*}
After simplifying the term $Q\left(\frac{-u}{\sigma}\right)-Q\left(\frac{N-u}{\sigma}\right)$ we reach the announced result. 
\end{proof}
The transition probabilities $p_{i,j}=\pr(\hY=j|\hX=i)$ of the hard-output QKD channel model
are directly derived by integrating the conditional density function of the soft output $Y$ established by the previous theorem. 
\begin{corollary}
\label{cor_pij}
The probability that Bob's photon falls in bin $j$ given that Alice's photon fell in 
bin $i$ is given by
\begin{equation}
\label{equ_pij}
    p_{ij} ~=~\pr(\hY=j|\hX=i) = \frac
    {\int_0^N \left[Q\left(\frac{j-u}{\sigma}\right)-Q\left(\frac{j+1-u}{\sigma}\right)\right] \cdot \left[Q\left(\frac{i-u}{\sigma}\right)-Q\left(\frac{i+1-u}{\sigma}\right)\right] \,du}
    {\int_0^N \left[Q\left(\frac{-t}{\sigma}\right)-Q\left(\frac{N-t}{\sigma}\right)\right] \cdot \left[Q\left(\frac{i-t}{\sigma}\right)-Q\left(\frac{i+1-t}{\sigma}\right)\right] \,dt}, ~~ i, j \in \Z_N.
\end{equation}
\end{corollary}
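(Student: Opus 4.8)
The plan is to obtain $p_{ij}$ directly from the soft-output likelihood of Theorem~\ref{th_Y_cond_hX_tY} by integrating it over the $j$-th bin. Since $\hY=\lfloor Y \rfloor$ and $Y$ is supported on $[0,N)$, the event $\hY=j$ is exactly $Y \in [j,j+1)$; hence, keeping the conditioning on both frames being valid throughout,
\[
p_{ij}=\pr(\hY=j\mid\hX=i)=\int_j^{j+1} p(y\mid\hX=i)\,dy,
\]
where $p(y\mid\hX=i)$ is the density in~(\ref{equ_Y_cond_hX_tY}).

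First I would substitute the expression~(\ref{equ_Y_cond_hX_tY}) into this integral. The denominator of that likelihood does not depend on $y$, so it factors out of the $dy$-integral unchanged and becomes the denominator of~(\ref{equ_pij}) verbatim. What remains is to integrate the numerator
\[
\int_j^{j+1}\!\!\int_0^N \frac{1}{\sqrt{2\pi\sigma^2}}\exp\!\Bigl(-\tfrac{(y-u)^2}{2\sigma^2}\Bigr)\Bigl[Q\bigl(\tfrac{i-u}{\sigma}\bigr)-Q\bigl(\tfrac{i+1-u}{\sigma}\bigr)\Bigr]\,du\,dy
\]
over $y$. Because the integrand is nonnegative, Tonelli's theorem lets me exchange the order of the $dy$ and $du$ integrations without further justification, so I push the $dy$-integral inside and apply it only to the Gaussian factor, the $Q$-difference bracket being independent of $y$.

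The central step is then the elementary evaluation
\[
\int_j^{j+1}\frac{1}{\sqrt{2\pi\sigma^2}}\exp\!\Bigl(-\tfrac{(y-u)^2}{2\sigma^2}\Bigr)\,dy
= Q\!\Bigl(\tfrac{j-u}{\sigma}\Bigr)-Q\!\Bigl(\tfrac{j+1-u}{\sigma}\Bigr),
\]
obtained by the substitution $w=(y-u)/\sigma$ together with the definition $Q(x)=\int_x^\infty\phi(t)\,dt$. Inserting this into the remaining $du$-integral reproduces exactly the numerator of~(\ref{equ_pij}), and recombining with the unchanged denominator yields the claimed formula. I expect no genuine obstacle: the only point needing care is the interchange of integration order, which is immediate from nonnegativity, so the corollary is essentially a one-line consequence of Theorem~\ref{th_Y_cond_hX_tY}.
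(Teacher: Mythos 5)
Your proposal is correct and is essentially identical to the paper's own proof, which likewise integrates the likelihood~(\ref{equ_Y_cond_hX_tY}) over $y\in[j,j+1)$ and then switches the two integrals so the inner Gaussian integral produces the factor $Q\bigl(\tfrac{j-u}{\sigma}\bigr)-Q\bigl(\tfrac{j+1-u}{\sigma}\bigr)$. You merely spell out the details the paper leaves implicit (the Tonelli justification and the explicit substitution $w=(y-u)/\sigma$), which is fine.
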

\begin{proof}
Integrate (\ref{equ_Y_cond_hX_tY}) over Bob's photon position $y$ from $j$ to $j+1$,
then switch the two integrals to get the result announced by this corollary. 
\end{proof}

We complete this section by establishing the expression of the {\em a posteriori} probability
useful for soft-decision decoding, e.g., for belief-propagation decoding of low-density 
parity-check codes, for ordered-statistics decoding of linear block codes, or 
Viterbi decoding of convolutional codes. 
Let $APP(i)=APP(\hX=i)=\pr(\hX=i|Y=y)$ be the {\em a posteriori} probability of
Alice's photon bin number $i$, for $i=0 \dots N-1$. The next theorem gives
the expression $APP(i)$, which is used in our proposed coding/decoding schemes
in Section~\ref{sec_codes}. 

\begin{theorem}
\label{th_app}
Given the photon position $Y=y$ on Bob's side, the probability for Alice's photon to belong
to bin number $i$ is
\begin{equation}
\label{equ_app}
APP(i) =  \frac{
\int_0^N \frac{1}{\sqrt{2\pi\sigma^2}} e^{-\frac{(y-u)^2}{2\sigma^2}}\cdot \left[Q\left(\frac{i-u}{\sigma}\right)-Q\left(\frac{i+1-u}{\sigma}\right)\right] \,du}
{\int_0^N \frac{1}{\sqrt{2\pi\sigma^2}} e^{-\frac{(y-t)^2}{2\sigma^2}}\cdot \left[Q\left(\frac{-t}{\sigma}\right)-Q\left(\frac{N-t}{\sigma}\right)\right] \,dt},
~~ i \in \Z_N.
\end{equation}
\end{theorem}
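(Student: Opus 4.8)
The plan is to compute $APP(i)=\pr(\hX=i\,|\,Y=y)$ by Bayes' rule in the conditional probability space where both frames are valid, since the existence of $\hX$ and of $Y$ already presupposes $\tX\in[0,N)$ and $\tY\in[0,N)$. In this space the prior on Alice's bin is the $\hpi_i$ established in Lemma~\ref{lem_U_tX}, and the likelihood is the $p(y|\hX=i)$ established in Theorem~\ref{th_Y_cond_hX_tY}, so that
\[
APP(i)=\frac{p(y|\hX=i)\,\hpi_i}{\sum_{j=0}^{N-1} p(y|\hX=j)\,\hpi_j}.
\]
Everything then reduces to substituting the two already-proved expressions and simplifying.

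First I would name the three building blocks. Let $A_i$ denote the (numerator) integral of~(\ref{equ_Y_cond_hX_tY}), namely $\int_0^N \frac{1}{\sqrt{2\pi\sigma^2}}e^{-(y-u)^2/2\sigma^2}\bigl[Q(\frac{i-u}{\sigma})-Q(\frac{i+1-u}{\sigma})\bigr]\,du$; let $B_i$ denote its denominator; and let $C$ denote the denominator of $\hpi_i$ in~(\ref{equ_hpi}). The crux of the whole argument, and the one thing worth spotting, is that the normalizing integral $B_i$ of the likelihood is \emph{identically} the numerator of $\hpi_i$ in~(\ref{equ_hpi}). Consequently $p(y|\hX=i)=A_i/B_i$ and $\hpi_i=B_i/C$, and the product collapses to $p(y|\hX=i)\,\hpi_i=A_i/C$, where $C$ is independent of both $i$ and $y$.

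Plugging this into the Bayes ratio, the common factor $1/C$ cancels between numerator and denominator, leaving $APP(i)=A_i/\sum_{j=0}^{N-1}A_j$. It then remains only to identify the sum in the denominator. I would interchange the finite sum with the integral over $u$ (justified by nonnegativity and finiteness of the integrand) and apply the telescoping identity
\[
\sum_{j=0}^{N-1}\Bigl[Q\bigl(\tfrac{j-u}{\sigma}\bigr)-Q\bigl(\tfrac{j+1-u}{\sigma}\bigr)\Bigr]=Q\bigl(\tfrac{-u}{\sigma}\bigr)-Q\bigl(\tfrac{N-u}{\sigma}\bigr),
\]
which collapses the sum of per-bin probabilities into the single ``Alice-valid'' probability. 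This gives $\sum_j A_j=\int_0^N \frac{1}{\sqrt{2\pi\sigma^2}}e^{-(y-t)^2/2\sigma^2}\bigl[Q(\frac{-t}{\sigma})-Q(\frac{N-t}{\sigma})\bigr]\,dt$, which is exactly the denominator claimed in~(\ref{equ_app}).

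The only genuine obstacle is recognizing the $B_i$ cancellation: once one notices that the likelihood's normalizer reappears verbatim as the prior's numerator, the factor $B_i$ (which does depend on $i$) disappears and the remaining steps are the routine telescoping sum and the interchange of summation and integration. Unlike the high-SNR expansions elsewhere in the paper, no asymptotic estimates or $\mathcal{O}$-bounds are needed, so the derivation is exact and short.
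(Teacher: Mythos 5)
Your proof is correct: the cancellation you identify is genuine (the normalizing integral in the denominator of~(\ref{equ_Y_cond_hX_tY}) is verbatim the numerator of $\hpi_i$ in~(\ref{equ_hpi})), the mixture representation $p(y)=\sum_{j}\hpi_j\,p(y|\hX=j)$ is legitimate because the events $\{\hX=j\}$, $j\in\Z_N$, partition the space in which both frames are valid, and the telescoping identity collapses $\sum_j A_j$ into exactly the denominator of~(\ref{equ_app}). The paper's proof uses the same Bayes skeleton with the same two ingredients---the prior from Lemma~\ref{lem_U_tX} and the likelihood from Theorem~\ref{th_Y_cond_hX_tY}---but handles the evidence differently: rather than expanding $p(y)$ as a mixture, it derives $p(y)$ from scratch as the truncated marginal of $\tY$ given $\tX\in[0,N)$, reusing~(\ref{equ_pdf_tX_cond_tY}) with the roles of $X$ and $Y$ exchanged, which produces the standalone expression~(\ref{equ_pdfY_valid}); since that expression carries the same constant denominator $C$, the same cancellations occur upon substitution. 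The trade-off is clear: the paper's detour yields~(\ref{equ_pdfY_valid}) as a by-product, which is listed among the contributions and reused later (e.g., for $p(x)$ and $p(y)$ in Theorem~\ref{th_capa}), whereas your route is shorter and entirely self-contained---it needs nothing beyond the two already-proved statements, a finite telescoping sum, and an interchange of sum and integral---and it makes the normalization $\sum_i APP(i)=1$ manifest. In fact, the paper records precisely your identity, $p(y)=\sum_{i=0}^{N-1}\hpi_i\, p(y|\hX=i)$, immediately after its proof as a consistency check left to the reader; you have, in effect, promoted that check to the proof itself.
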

\begin{proof}
Keeping in mind that $\tX,\tY \in [0,N)$, 
apply Bayes' rule to get
\[
\pr(\hX=i|Y=y) = \frac{p(y|\hX=i) \cdot \pr(\hX=i)}{p(y)}. 
\]
The result announced by the theorem is then found in three steps.\\ 
(i) Use (\ref{equ_Y_cond_hX_tY}) from Theorem~\ref{th_Y_cond_hX_tY} for $p(y|\hX=i)$.\\
(ii) Use (\ref{equ_hpi}) for the a priori $\pr(\hX=i)$.\\ 
(iii) Finally, $p(y)=p_{Y|\tX \in [0,N)}(y)=p_{\tY|\tX \in [0,N)}(\ty=y)/\int_0^N p_{\tY|\tX \in [0,N)}(t)\,dt$ 
after truncating the density of $\tY$.
The density $p_{\tY|\tX \in [0,N)}(\ty)$ is found in (\ref{equ_pdf_tX_cond_tY}) while
switching the letters $x$ (resp. $X$) and $y$ (resp. $Y$), in conjunction with (\ref{equ_condXt_U}) and (\ref{equ_U_tX}),
\begin{equation}
\label{equ_pdfY_valid}  
p(y)=  
\frac{
\int_0^N \frac{1}{\sqrt{2\pi\sigma^2}} e^{-\frac{(y-u)^2}{2\sigma^2}}\cdot \left[Q\left(\frac{-u}{\sigma}\right)-Q\left(\frac{N-u}{\sigma}\right)\right] \,du}
{\int_0^N \left[Q\left(\frac{-t}{\sigma}\right)-Q\left(\frac{N-t}{\sigma}\right)\right]^2 \,dt}.
\end{equation}
\end{proof}

For consistency, the reader could check that $p(y)$ given at the end of the proof of Theorem~\ref{th_app} is also equal to $\sum_{i=0}^{N-1} \hat{\pi}_i \cdot p(y|\hX=i)$ from (\ref{equ_hpi}) and (\ref{equ_Y_cond_hX_tY}). Figures~\ref{fig_likelihoods10dB} and~\ref{fig_likelihoods25dB} plot the likelihoods
$p(y|\hX=i)$ at low SNR $\gamma=10$~dB (low photon detector precision) and a relatively higher SNR $\gamma=25$~dB (higher photon detector precision), respectively. At low SNR, $p(y|\hX=i)$ has a Gaussian shape. The shape
tends to become square at high SNR. The {\em a posteriori} probabilities
$APP(i)$, $i \in \Z_N$, have a plot similar to the channel likelihoods. 

\begin{figure}[hbt]
   \centering
   \includegraphics[width=7cm,angle=270]{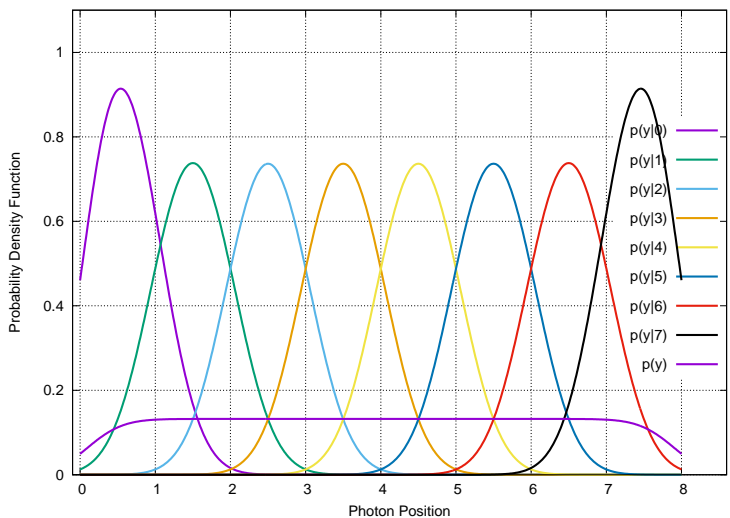}
   \caption{Soft-Output channel likelihoods, N=8 bins per frame, SNR=10~dB.}
\label{fig_likelihoods10dB}
\end{figure}

\begin{figure}[hbt]
   \centering
   \includegraphics[width=7cm,angle=270]{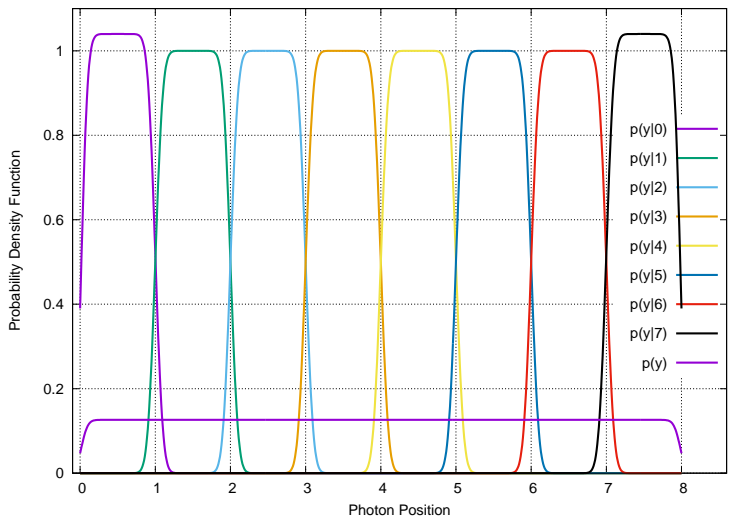}
   \caption{Soft-Output channel likelihoods, N=8 bins per frame, SNR=25~dB.}
\label{fig_likelihoods25dB}
\end{figure}
\section{Secret Key Information Rates \label{sec_mut_info}}
\subsection{Mutual Information between Raw Keys}
Firstly, we consider the mutual information of the algebraic hard-output channel defined by the transition probability $p(\hy|\hx)=\pr(\hY=\hy|\hX=\hx)$. In Corollary~\ref{cor_pij}, the expression
of $p_{ij}=p(\hY=j|\hX=i)$ was established. Hence, we can directly
compute the mutual information as follows:
\begin{align}
I(\hX;\hY) &=H(\hY)-H(\hY|\hX) \nonumber \\ 
&=-\sum_{i=0}^{N-1} \pr(\hY=i) \log(\pr(\hY=i)) \nonumber \\
&~~~+ \sum_{i=0}^{N-1} \pr(\hX=i) 
\sum_{j=0}^{N-1} \pr(\hY=j|\hX=i) \log(\pr(\hY=j|\hX=i)) \nonumber \\ 
&=-\sum_{i=0}^{N-1} \hpi_i \log(\hpi_i) 
~+ \sum_{i=0}^{N-1} \hpi_i \sum_{j=0}^{N-1} p_{ij} \log(p_{ij}),  \label{equ_I_hX_hY}
\end{align}
where {\em a priori} $\hpi_i$ of $\hX$ and $\hY$ is found in (\ref{equ_hpi}). The plot of $I(\hX;\hY)$ expressed in bits versus the signal-to-noise ratio is depicted in Figure~\ref{fig_mutinfo_discrete}, for 2, 3, and 4 coded bits per transmitted photon. 
As expected, the curves go towards the asymptote $H(\hX)$ at high signal-to-noise ratio. In fact, the entropy $-\sum_{i=0}^{N-1} \hpi_i \log(\hpi_i)$ is very stable even at low SNR and could be well approximated by $\log(N)$. The summation in (\ref{equ_I_hX_hY}) could be truncated to neighboring bins or to bins within an integer distance less than $D$,
\begin{equation}
\label{equ_approxI_discrete}
I(\hX;\hY) \approx  \log(N) + \frac{1}{N} \sum_{|i-j|\le D} p_{ij} \log(p_{ij}).
\end{equation}
The simplification (\ref{equ_approxI_discrete}) is an excellent approximation down to $\gamma \ge 10$~dB for $D=1$ only and it extends to $\gamma \ge 5$~dB for $D=2$. The next proposition gives more insight into the behavior of the {\em a priori} and the transition probabilities, and the discrete channel mutual information in the low-noise regime. 

\begin{figure}[h]
   \centering
   \includegraphics[width=9cm,angle=270]{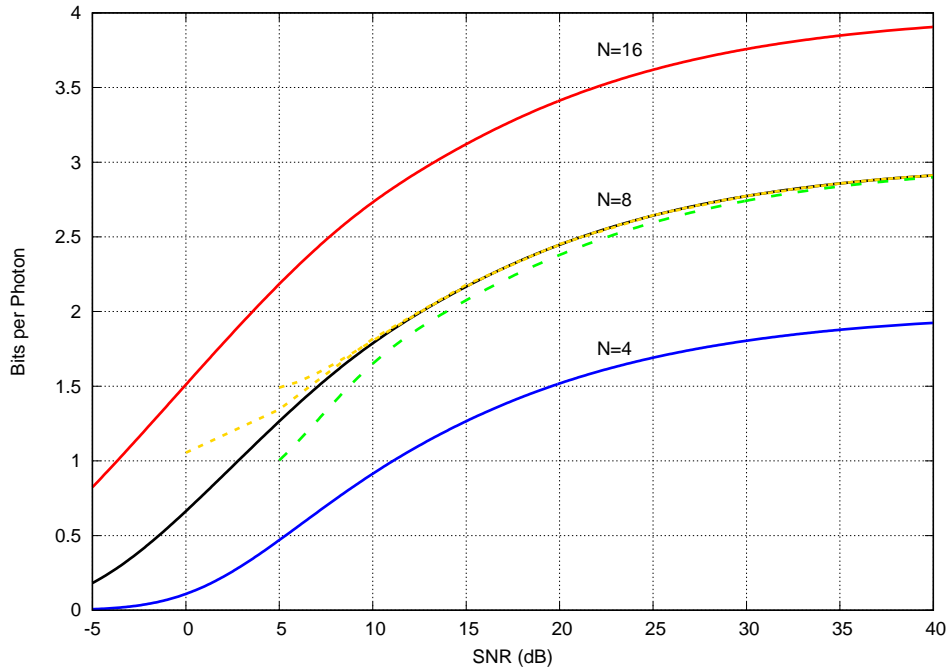}
   \caption{Mutual information $I(\hX;\hY)$ of the algebraic hard-output TE-QKD channel, for $N=4, 8, 16$ bins per frame.}
\label{fig_mutinfo_discrete}
\end{figure}

\begin{proposition}
\label{prop_high_snr}
At high signal-to-noise ratio, when $\sigma^2 \ll 1$, we have the following results:\\
a) The transition probability of the hard-output QKD channel established in Corollary~\ref{cor_pij} satisfies:\\
At the two extremal bins, $i=0$ and $i=N-1$, we have 
\begin{equation}
\label{equ_p01_small_sigma}
p_{0,1}=\frac{\frac{\sigma}{\sqrt{\pi}}+\mathcal{O}(e^{-\frac{\gamma}{4}})}
{1-\frac{1+\sqrt{2}}{2\sqrt{\pi}}\cdot\sigma+\mathcal{O}(e^{-\frac{\gamma}{4}})},
\end{equation}
where $p_{0,1}=p_{N-1,N-2}=1-p_{0,0}=1-p_{N-1,N-1}$, and $\sigma=1/\sqrt{\gamma}$.\\ 
At the middle bins, $i=2 \ldots N-2$, we have
\begin{equation}
\label{equ_p12_small_sigma}
p_{1,2}=\frac{\frac{\sigma}{\sqrt{\pi}}+\mathcal{O}(e^{-\frac{\gamma}{4}})}
{1-\mathcal{O}(e^{-\frac{\gamma}{4}})},
\end{equation}
where $p_{1,2}=p_{i,i+1}=p_{i,i-1}=(1-p_{i,i})/2$.
All other transition probabilities $p_{i,j}$ for $|i-j|\ge 2$ are
$\mathcal{O}(e^{-\frac{\gamma}{4}})$ and can be forced to $0$ in any numerical calculation at high SNR.\\
b) The {\em a priori} probabilities established in Lemma~\ref{lem_U_tX} satisfy\\
At the two extremal bins, $i=0$ and $i=N-1$, we have 
\begin{equation}
\label{equ_hpi0_small_sigma}
\hpi_0=\hpi_{N-1}=\frac{1-\frac{1+\sqrt{2}}{2\sqrt{\pi}}\cdot\sigma+\mathcal{O}(e^{-\frac{\gamma}{4}})}
{N\cdot \left[ 1-\frac{1+\sqrt{2}}{\sqrt{\pi}}\cdot\bsigma+\mathcal{O}(e^{-\frac{\gamma}{4}})\right]},
\end{equation}
where the numerator includes $\sigma$ but the denominator involves
$\bsigma=\sigma/N$. For the middle bins, with $i=2 \ldots N-2$, we have
\begin{equation}
\label{equ_hpii_small_sigma}
\hpi_i=\frac{1-\mathcal{O}(e^{-\frac{\gamma}{4}})}
{N\cdot \left[ 1-\frac{1+\sqrt{2}}{\sqrt{\pi}}\cdot\bsigma+\mathcal{O}(e^{-\frac{\gamma}{4}})\right]}. 
\end{equation}
c) Following a) and b), the mutual information of the discrete-input discrete-output
QKD channel given by (\ref{equ_I_hX_hY}) becomes
\begin{align}
I(\hX;\hY) &=\frac{N-2\beta\sigma}{N(1-2\beta\bsigma)} \log\left[N(1-2\beta\bsigma)\right]
-\frac{2(1-\beta\sigma)}{N(1-2\beta\bsigma)} \log(1-\beta\sigma) \nonumber\\
&-\frac{2(1-\beta\sigma)}{N(1-2\beta\bsigma)} H_2\left(\frac{\sigma/\sqrt{\pi}}{1-\beta\sigma}\right)
-\frac{(N-2)}{N(1-2\beta\bsigma)}H_3\left(\frac{\sigma}{\sqrt{\pi}}\right)+ \mathcal{O}(e^{-\frac{\gamma}{4}}), \label{equ_approx_IhXhY_H2_H3}
\end{align}
where $\beta=\frac{1+\sqrt{2}}{2\sqrt{\pi}}$, $\sigma=1/\sqrt{\gamma}$, and $\bsigma=\sigma/N$. 
\end{proposition}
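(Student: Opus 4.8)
The plan is to feed the exact formulas from Corollary~\ref{cor_pij} and Lemma~\ref{lem_U_tX} into the small-noise estimates of Lemma~\ref{lem_Q_minus_Q}, and then assemble everything in the mutual-information formula (\ref{equ_I_hX_hY}). Throughout I would write $g_i(u)=Q(\frac{i-u}{\sigma})-Q(\frac{i+1-u}{\sigma})=f_\sigma(u-i)$, the rescaled bump of Lemma~\ref{lem_Q_minus_Q} centered on bin $i$, and $F(u)=Q(\frac{-u}{\sigma})-Q(\frac{N-u}{\sigma})$ the frame-validity profile. The single most useful structural remark is that the numerator of the transition probability $p_{ij}$ in (\ref{equ_pij}) is the overlap $\int_0^N g_i g_j\,du$, while its denominator $\int_0^N F\,g_i\,du$ is \emph{exactly} the numerator of $\hpi_i$ in (\ref{equ_hpi}). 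Hence parts a) and b) share all their integrals, and I would compute them once, introducing $\beta=\frac{1+\sqrt{2}}{2\sqrt{\pi}}$ (so $\frac{1+\sqrt2}{\sqrt\pi}=2\beta$).

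For part a) I first treat the numerator. For adjacent bins I shift $w=u-i$ and extend the finite range to $\R$ (the truncation costs only $\mathcal{O}(e^{-\gamma/4})$ for an inner bin); near the shared edge $w=1$ the factors reduce to $f_\sigma(w)\approx 1-Q(\frac{1-w}{\sigma})$ and $f_\sigma(w-1)\approx Q(\frac{1-w}{\sigma})$, so with $s=(w-1)/\sigma$ the overlap becomes $\sigma\int_\R Q(s)(1-Q(s))\,ds$. This integral equals $2[\int_0^\infty Q-\int_0^\infty Q^2]=2[\frac{1}{\sqrt{2\pi}}-\frac{\sqrt2-1}{2\sqrt\pi}]=\frac{1}{\sqrt\pi}$, where the two one-sided integrals are the $I_1$ and $I_2$ leading constants already produced in the proof of Proposition~\ref{prop_pe_uncoded}; thus $\int_0^N g_i g_{i\pm1}\,du=\frac{\sigma}{\sqrt\pi}+\mathcal{O}(e^{-\gamma/4})$. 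For $|i-j|\ge 2$ the two bumps are separated by a full bin, so their product is dominated on the gap by $Q(\frac{u-i-1}{\sigma})Q(\frac{j-u}{\sigma})$ and the same $v^2+(1-v)^2\ge\frac12$ argument used for $I_3$ gives $\mathcal{O}(e^{-\gamma/4})$. The row sums then force $p_{ii}=1-2p_{i,i+1}$ for inner bins and $p_{00}=1-p_{0,1}$ at the edges, which is the content of (\ref{equ_p01_small_sigma})--(\ref{equ_p12_small_sigma}) once the denominators are known.

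For part b) the denominator of $\hpi_i$ is $\int_0^N F^2\,du$; rescaling $u=Nu'$, $\sigma=N\bsigma$ turns this into $N\int_0^1 f_{\bsigma}^2=N[1-2\beta\bsigma+\mathcal{O}(e^{-\bgamma/4})]$ by part (d) of Lemma~\ref{lem_Q_minus_Q}, i.e.\ the common denominator $N(1-2\beta\bsigma)$. For the numerator $\int_0^N F g_i\,du$: at an inner bin $F\approx 1$ on the support of $g_i$ up to exponentially small corrections, and $\int_\R f_\sigma=1$ exactly (Fubini), so the numerator is $1-\mathcal{O}(e^{-\gamma/4})$, giving (\ref{equ_hpii_small_sigma}). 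At the edge bin $i=0$ the far factor $Q(\frac{N-u}{\sigma})$ is negligible and $F\approx Q(\frac{-u}{\sigma})$, so the numerator is $\int_0^\infty f_\sigma(u)Q(\frac{-u}{\sigma})\,du=\int_0^\infty f_\sigma-\int_0^\infty f_\sigma Q(\frac{u}{\sigma})$; the first piece is $1-\frac{\sigma}{\sqrt{2\pi}}$ (from the $f_\sigma(x)=f_\sigma(1-x)$ symmetry of Lemma~\ref{lem_Q_minus_Q}a with part d), and the second collapses, again via the $I_1,I_2$ constants, to $\frac{\sigma}{2\sqrt\pi}$, summing to $1-\beta\sigma$. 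This is (\ref{equ_hpi0_small_sigma}), and dividing the a)-numerators by these denominators yields (\ref{equ_p01_small_sigma})--(\ref{equ_p12_small_sigma}).

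For part c) I substitute a) and b) into (\ref{equ_I_hX_hY}). Writing $A=N(1-2\beta\bsigma)$, with $\hpi_0=\hpi_{N-1}=(1-\beta\sigma)/A$ and $\hpi_i=1/A$ for the $N-2$ inner bins, the term $-\sum_i\hpi_i\log\hpi_i$ collapses, using $2(1-\beta\sigma)+(N-2)=N-2\beta\sigma$, to the first two terms of (\ref{equ_approx_IhXhY_H2_H3}). For the conditional part, each inner row contributes $\sum_j p_{ij}\log p_{ij}=-H_3(\frac{\sigma}{\sqrt\pi})$ because its three nonzero entries are exactly $1-2\frac{\sigma}{\sqrt\pi}$ and $\frac{\sigma}{\sqrt\pi}$ twice, while each edge row contributes $-H_2(\frac{\sigma/\sqrt\pi}{1-\beta\sigma})$ from its entries $p_{0,1}$ and $1-p_{0,1}$; weighting by $\hpi_i$ reproduces the last two terms. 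I expect the main obstacle to be the edge-bin bookkeeping: the numerator $D_0$ mixes the bin scale $\sigma$ with the frame scale $\bsigma=\sigma/N$, so one must justify dropping the far boundary and extending the integration to $\R$ while verifying that the residual is genuinely $\mathcal{O}(e^{-\gamma/4})$ and not merely $\mathcal{O}(1/\sqrt\gamma)$. Checking the identity $\sum_i\hpi_i=1$, i.e.\ $(N-2\beta\sigma)/A=1$, is a convenient safeguard that the two scales have been combined correctly.
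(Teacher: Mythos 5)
Your proposal is correct, and every constant you produce matches the paper's: the adjacent-bin overlap $\int_0^N g_i g_{i\pm1}\,du=\tfrac{\sigma}{\sqrt{\pi}}+\mathcal{O}(e^{-\gamma/4})$, the edge integral $\int_0^N F g_0\,du=1-\beta\sigma+\mathcal{O}(e^{-\gamma/4})$, and the frame integral $\int_0^N F^2\,du=N(1-2\beta\bsigma)$ up to exponentially small terms. Your route, however, is organized quite differently from the paper's. The paper, for space, details only the single integral $I_5=\int_0^N F\,g_0\,dt$ (noting, exactly as you do, that it is simultaneously the denominator of $p_{0,1}$ and the numerator of $\hpi_0$), evaluates it by brute-force expansion of the product into eight elementary $Q$-integrals handled one by one with the antiderivatives (\ref{equ_antider})--(\ref{equ_antider2}) and $I_3$-type exponential bounds, and then asserts that all remaining claims follow by ``similar calculus techniques.'' You reach the same value $1-\beta\sigma$ by discarding the far-edge factor, writing $F\approx 1-Q(u/\sigma)$, and invoking the exact Fubini identity $\int_\R f_\sigma=1$ together with the symmetry $f_\sigma(x)=f_\sigma(1-x)$ of Lemma~\ref{lem_Q_minus_Q}, which replaces most of the eight-term bookkeeping by structure. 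You also supply three pieces the paper never writes down: the overlap computation $\sigma\int_\R Q(s)(1-Q(s))\,ds=\sigma/\sqrt{\pi}$, the exact row normalization $\sum_j p_{ij}=1$ (from the telescoping identity $\sum_j g_j=F$), which is what legitimately converts the off-diagonal estimates into the diagonal entries, and the part-c assembly in which $2(1-\beta\sigma)+(N-2)=N-2\beta\sigma$ collapses the entropy term while the edge and inner rows are recognized as $H_2$ and $H_3$ entropies. What the paper's expansion buys is mechanical self-containedness for the one integral it treats; what yours buys is reuse (parts a) and b) share all their integrals), uniform coverage of all bins, and the consistency check $\sum_i\hpi_i=(N-2\beta\sigma)/A=1$, which is genuinely useful and absent from the paper. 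One small caveat: substituting $p_{i,i\pm1}=\sigma/\sqrt{\pi}+\mathcal{O}(e^{-\gamma/4})$ into $H_2$ and $H_3$ multiplies the exponential error by a $\log\gamma$ factor; this slack is present in the proposition's own statement as well, so it is not a defect of your argument, but a fully rigorous write-up would widen the exponent slightly (e.g., to $\mathcal{O}(e^{-\gamma/5})$) or note that the convention absorbs it.
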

\begin{proof}
For the sake of space, we only show the detailed proof for the denominator of $p_{0,1}$ (also equal to the numerator of $\hpi_0$). All other results are found using similar calculus techniques. The denominator of $p_{0,1}$ from Corollary~\ref{cor_pij} ($i=0, j=1$) is equal to the integral
\begin{align*}
I_5 &=\int_0^N \left[Q\left(\tfrac{-t}{\sigma}\right)-Q\left(\tfrac{N-t}{\sigma}\right)\right]\cdot \left[Q\left(\tfrac{-t}{\sigma}\right)-Q\left(\tfrac{1-t}{\sigma}\right)\right] ~dt\\
&= \int_0^N \left[1-Q\left(\tfrac{t}{\sigma}\right)-Q\left(\tfrac{N-t}{\sigma}\right)\right]\cdot \left[Q\left(\tfrac{t-1}{\sigma}\right)-Q\left(\tfrac{t}{\sigma}\right)\right] ~dt\\
&= \int_0^N Q\left(\tfrac{t-1}{\sigma}\right) ~dt - \int_0^N Q\left(\tfrac{t}{\sigma}\right) ~dt + \int_0^N Q^2\left(\tfrac{t}{\sigma}\right) ~dt
- \int_0^N Q\left(\tfrac{t}{\sigma}\right) Q\left(\tfrac{t-1}{\sigma}\right) ~dt\\
&- \int_0^N Q\left(\tfrac{N-t}{\sigma}\right) Q\left(\tfrac{t-1}{\sigma}\right) ~dt
+ \int_0^N Q\left(\tfrac{N-t}{\sigma}\right) Q\left(\tfrac{t}{\sigma}\right) ~dt\\
&=\underbrace{\int_{-1}^0 Q\left(\tfrac{t}{\sigma}\right) ~dt}_{(i)}  
- \underbrace{\int_{N-1}^N Q\left(\tfrac{t}{\sigma}\right) ~dt}_{(ii)} 
+ \underbrace{\int_0^N Q^2\left(\tfrac{t}{\sigma}\right) ~dt}_{(iii)}
- \underbrace{\int_0^1 Q\left(\tfrac{t}{\sigma}\right) (1-Q\left(\tfrac{1-t}{\sigma}\right)) ~dt}_{(iv)} \\
&-\underbrace{\int_1^N Q\left(\tfrac{t}{\sigma}\right) Q\left(\tfrac{t-1}{\sigma}\right) ~dt}_{(v)}
-\underbrace{\int_0^1 Q\left(\tfrac{N-t}{\sigma}\right) (1-Q\left(\tfrac{1-t}{\sigma}\right)) ~dt}_{(vi)}\\
&-\underbrace{\int_1^N Q\left(\tfrac{N-t}{\sigma}\right) Q\left(\tfrac{t-1}{\sigma}\right) ~dt}_{(vii)} 
+ \underbrace{\int_0^N Q\left(\tfrac{N-t}{\sigma}\right) Q\left(\tfrac{t}{\sigma}\right) ~dt}_{(viii)}.
\end{align*}
Now we solve the elementary integrals (i)-(viii) one by one.\\
Using (\ref{equ_antider}),
$(i)=0-\tfrac{\sigma}{\sqrt{2\pi}}+Q(\tfrac{-1}{\sigma})+\tfrac{\sigma}{\sqrt{2\pi}}e^{-\tfrac{\gamma}{2}}=1-\tfrac{\sigma}{\sqrt{2\pi}}+\mathcal{O}(e^{-\tfrac{\gamma}{2}})$.
Using the fact that $Q(x)$ is a monotone decreasing function, 
then $(ii)=\mathcal{O}(e^{-(N-1)^2\tfrac{\gamma}{2}})$. The third integral is directly solved via (\ref{equ_antider2}): 
$(iii)=NQ^2(\tfrac{N}{\sigma})-\sigma\sqrt{\tfrac{2}{\pi}}Q(\tfrac{N}{\sigma})e^{-N^2\tfrac{\gamma}{2}}+\tfrac{\sigma}{\sqrt{\pi}}Q(\tfrac{N\sqrt{2}}{\sigma})-0+\sigma\sqrt{\tfrac{2}{\pi}}\cdot \tfrac{1}{2}-\tfrac{\sigma}{\sqrt{\pi}}\cdot \tfrac{1}{2}$. Then we find 
$(iii)=\left(\tfrac{\sqrt{2}-1}{2\sqrt{\pi}}\right) \sigma + \mathcal{O}(e^{-N^2\tfrac{\gamma}{2}})$.
$(iv)=I_1-I_3=\tfrac{\sigma}{\sqrt{2\pi}}+\mathcal{O}(e^{-\tfrac{\gamma}{4}})$.
For (v), $t^2+(t-1)^2 \ge 1$ in the interval $[1,N]$, then we have
$(v)=\mathcal{O}(e^{-\tfrac{\gamma}{2}})$. The first part of (vi) is
$\mathcal{O}(e^{-(N-1)^2\tfrac{\gamma}{2}})$ and the second part is
also $\mathcal{O}(e^{-(N-1)^2\tfrac{\gamma}{2}})$ because $(N-t)^2+(1-t)^2 \ge (N-1)^2$ in the interval $[0,1]$. So $(vi)=\mathcal{O}(e^{-(N-1)^2\tfrac{\gamma}{2}})$. Applying similar arguments, we get $(vii)=\mathcal{O}(e^{-(N-1)^2\tfrac{\gamma}{4}})$ and $(viii)=\mathcal{O}(e^{-N^2\tfrac{\gamma}{4}})$.
Combining (i)-(viii) yields $I_5=1-\tfrac{1+\sqrt{2}}{2\sqrt{\pi}}\sigma+\mathcal{O}(e^{-\tfrac{\gamma}{4}})$ as announced. 
\end{proof}

At high signal-to-noise ratio, Proposition~\ref{prop_high_snr}-a) shows how fast $p_{i,j}$ converges to $\tfrac{\sigma}{\sqrt{\pi}}$. The latter is a one-sided probability of error and it is half the double-sided probability of error stated in Proposition~\ref{prop_pe_uncoded}. As expected, $\hpi_i$ converges to $1/N$ much faster for inner bins as found in Proposition~\ref{prop_high_snr}-b). The high-SNR expression of $I(\hX;\hY)$
established in Proposition~\ref{prop_high_snr}-c) perfectly fits the exact mutual information of the discrete channel down to $\gamma=10$ dB and then diverges at low SNR below 10 dB. The binary entropy function represents the extremal bins error. The ternary entropy function carries the inner bins error. Expression~(\ref{equ_approx_IhXhY_H2_H3}) is a quick method to evaluate $I(\hX;\hY)$ at moderate and high signal-to-noise ratios without performing any integration.\\

One could ask how good is the approximated mutual information if the TE-QKD discrete channel
is assumed to have a circular transition probability matrix. Under the assumptions of Proposition~\ref{prop_pe_uncoded}, we take: 
1- $X$ and $Y$ are Gaussian, 2- all bins are equiprobable, and 3- the error probability of the discrete-input discrete-output channel is dominated by events where $X$ and $Y$ are separated by one or two bins only. According to Theorem~7.2.1 in~\cite[Ch.~7.2]{books:CT2006}, the expression for a circular discrete channel is
\begin{equation}
\label{equ_approx_discrete_circular}
I(\hX;\hY) \approx \log(N)+\sum_{j=-2}^2 p_{0j} \log_2(p_{0j}),    
\end{equation}
where $p_{01}=p_{0,-1} \approx \int_0^1 2p_1p_2~dv=2(I_1-I_2-I_3)$, $p_{02}=p_{0,-2} \approx \int_0^1 2p_2p_3~dv=2I_3$, 
and $p_{00}=1-2p_{01}-2p_{02}$. All three high-SNR approximations (\ref{equ_approx_IhXhY_H2_H3}), (\ref{equ_approxI_discrete}) with $D=2$, and (\ref{equ_approx_discrete_circular}) are respectively shown in dotted lines from top to bottom on Figure~\ref{fig_mutinfo_discrete} for $N=8$ bins per frame. (\ref{equ_approx_IhXhY_H2_H3}) and (\ref{equ_approxI_discrete}) follows the exact mutual information $I(\hX;\hY)$ at high SNR. (\ref{equ_approx_discrete_circular}) is not tight enough at $N=8$ but becomes tighter for $N \ge 16$ bins per frame. \\ 

The second step in this section is to compute the mutual information for the soft-output TE-QKD channel.
We chose to write $I(\hX;Y)=H(\hX)-H(\hX|Y)$. The second expression after flipping $X$ and $Y$
based on differential entropy is also equivalent from numerical stability point of view
and has all its terms established in the previous section. We prefer the mutual information where the high-SNR asymptote is visible, hence
\begin{equation}
\label{equ_I_hX_Y}
I(\hX;Y)=H(\hX)-H(\hX|Y) =  H(\hX) 
+ \sum_{i=0}^{N-1} \hpi_i \int_0^N p(y|\hx=i) \log_2(APP(i)) ~dy, 
\end{equation}
where the {\em a priori} $\hpi_i$ is from (\ref{equ_hpi}), the likelihood $p(y|\hx=i)$ is from (\ref{equ_Y_cond_hX_tY}), and the {\em a posteriori} $APP(i)$ is from (\ref{equ_app}). Figure~\ref{fig_mutinfo_soft} shows the mutual information $I(\hX;Y)$ versus normalized SNR $\bgamma$ for different
number of bins per frame. The red upper envelope is established by 
Theorem~\ref{th_capa} in the next section. It corresponds to the maximal mutual information achievable on the TE-QKD channel. 

\begin{figure}[h]
   \centering
   \includegraphics[width=10cm,angle=270]{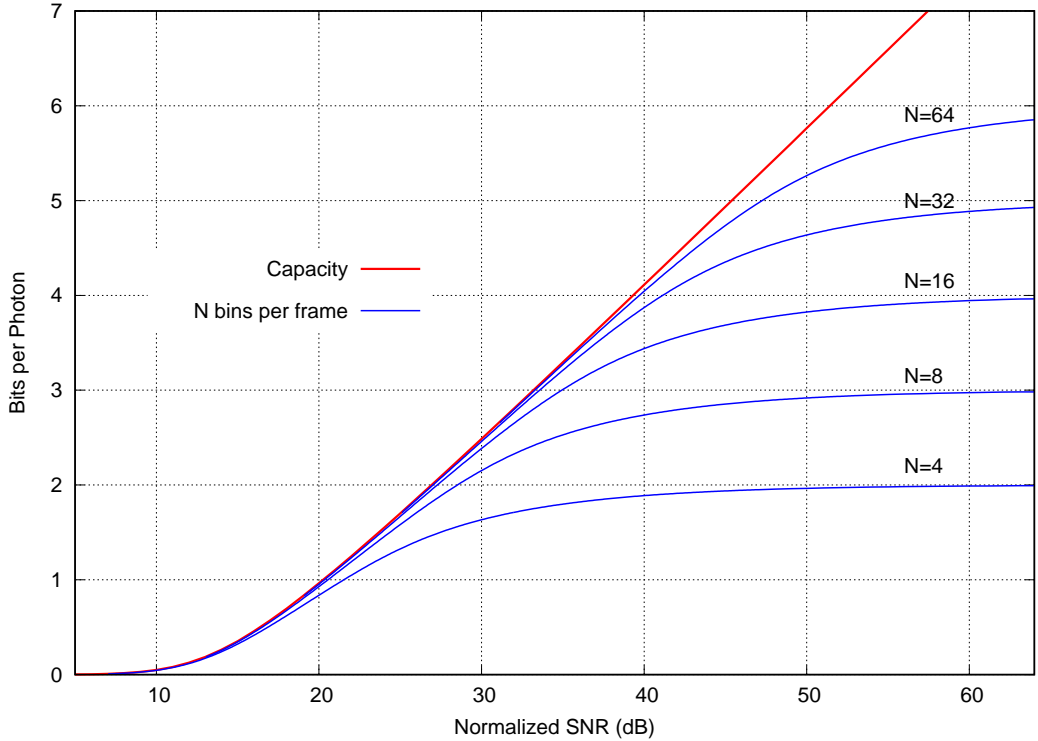}
   \caption{Mutual information $I(\hX;Y)$ of the soft-output QKD channel, for $N=4, 8, 16$ bins per frame.}
\label{fig_mutinfo_soft}
\end{figure}
\subsection{Maximal Secrecy Rate \label{sec_capacity}}
The random variables $X$, $\hX$, and $Y$ form a Markov chain
$X \rightarrow \hX \rightarrow Y$. Therefore, the data processing inequality \cite[Ch.~2]{books:CT2006} yields
\begin{equation}
I(\hX;Y)~\le~I(X;Y), ~~~\forall~N\ge 2.  
\end{equation}
Consequently, for any value of the number $N$ of bins per frame, the rate of our channel $p_{Y|\hX}(y|\hx)$ is always bounded from above by the rate of the continuous-input continuous-output channel $p_{Y|X}(y|x)$ corresponding to 
a continuum of zero-measure bins in Alice's frame. Thus, by determining the mutual information $I(X;Y)$
we get the maximal secrecy rate of the photon channel between Alice and Bob.\\

Without loss of generality, assume that the frame size is $1$, instead of $N$.
Now, the problem is to find $I(X;Y)$ where $X$ and $Y$ are truncated versions of the original photon positions, $X~=~\tX \in [0,1)$, $Y~=~\tY \in [0,1)$. 
The model in~(\ref{equ_XYmodel}) becomes $\tX=U+\mathcal{N}(0,\sigma^2)$,
$\tY=U+\mathcal{N}(0,\sigma^2)$, $U$ is uniform in $[0,1)$, and 
the two additive Gaussian noises are independent. The normalized signal-to-noise ratio is naturally defined by $\gamma=\bgamma=1/\sigma^2$ under this context of infinite number of bins and a frame of unit length.\\ 

\begin{theorem}
\label{th_capa}
The maximal secrecy rate $I(X;Y)$ is given by
\begin{align}
I(X;Y) &= h(Y)-h(Y|X) \\
       &=-\int_0^1 p(y)~\log(p(y))~dy ~+~
\int_0^1 p(x) \int_0^1 p(y|x)~\log(p(y|x))~dx dy \label{equ_IXY_0_1}
\end{align}
where $p(x)$ and $p(y)$ are from~(\ref{equ_pdfY_valid}) after
replacing the frame size $N$ by $1$, 
\begin{equation}
\label{equ_px_tX_tY} 
p(x)=p_{X|\tX,\tY \in [0,1)}(x)=  
\frac{
\int_0^1 \frac{1}{\sqrt{2\pi\sigma^2}} e^{-\frac{(x-u)^2}{2\sigma^2}} \left[Q\left(\frac{-u}{\sigma}\right)-Q\left(\frac{1-u}{\sigma}\right)\right] \,du}
{\int_0^1 \left[Q\left(\frac{-t}{\sigma}\right)-Q\left(\frac{1-t}{\sigma}\right)\right]^2 \,dt},
\end{equation}
\begin{equation}
\label{equ_py_tX_tY} 
p(y)=p_{Y|\tX,\tY \in [0,1)}(y)=  
\frac{
\int_0^1 \frac{1}{\sqrt{2\pi\sigma^2}} e^{-\frac{(y-u)^2}{2\sigma^2}} \left[Q\left(\frac{-u}{\sigma}\right)-Q\left(\frac{1-u}{\sigma}\right)\right] \,du}
{\int_0^1 \left[Q\left(\frac{-t}{\sigma}\right)-Q\left(\frac{1-t}{\sigma}\right)\right]^2 \,dt},  
\end{equation}
\begin{equation}
\label{equ_p_y_x_capa}
p(y|x)=\frac{1}{\sqrt{4\pi\sigma^2}} e^{-\frac{(y-x)^2}{4\sigma^2}} 
\frac{\left[Q\left(\frac{0-(x+y)/2}{\sigma/\sqrt{2}}\right)
-Q\left(\frac{1-(x+y)/2}{\sigma/\sqrt{2}}\right)\right]}
{\int_0^1 \frac{1}{\sqrt{2\pi\sigma^2}} e^{-\frac{(x-u)^2}{2\sigma^2}} \left[Q\left(\frac{-u}{\sigma}\right)-Q\left(\frac{1-u}{\sigma}\right)\right] \,du}. 
\end{equation}
\end{theorem}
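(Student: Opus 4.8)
The plan is to reduce everything to three density computations and then assemble the mutual information mechanically. The decomposition $I(X;Y)=h(Y)-h(Y|X)$ in (\ref{equ_IXY_0_1}) is merely the definition of mutual information via differential entropies, so no work is needed there. The marginals $p(x)$ and $p(y)$ in (\ref{equ_px_tX_tY}) and (\ref{equ_py_tX_tY}) are immediate: they are exactly (\ref{equ_pdfY_valid}) with the frame size $N$ replaced by $1$, invoking the Alice--Bob symmetry of the model. Hence the only genuinely new object is the conditional density $p(y|x)$ in (\ref{equ_p_y_x_capa}), and the entire proof effort concentrates there.

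First I would reduce $p(y|x)$ to a ratio of joint densities. Writing $E=\{\tX,\tY\in[0,1)\}$ for the valid-frame event and using $X=\tX$, $Y=\tY$ on $E$, one gets $p(y|x)=p_{\tX,\tY}(x,y)\big/\int_0^1 p_{\tX,\tY}(x,y')\,dy'$ for $x,y\in[0,1)$, where the common normalizer $\pr(E)$ cancels. Both numerator and denominator are then built from the single joint density $p_{\tX,\tY}(x,y)=\int_0^1 \frac{1}{2\pi\sigma^2}\exp(-\frac{(x-u)^2+(y-u)^2}{2\sigma^2})\,du$, obtained by conditioning on $U=u$ (where $\tX$ and $\tY$ are independent, cf. (\ref{equ_condXt_U})) and integrating against $p_U(u)=1$ on $[0,1)$.

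The key algebraic step is to complete the square in $u$: the identity $(x-u)^2+(y-u)^2=2(u-\frac{x+y}{2})^2+\frac{(x-y)^2}{2}$ factors the integrand into a $u$-independent term $\exp(-\frac{(x-y)^2}{4\sigma^2})$ times a Gaussian in $u$ centered at $\frac{x+y}{2}$ with effective variance $\sigma^2/2$. Integrating that Gaussian over $u\in[0,1)$ and rescaling by $\sigma/\sqrt{2}$ turns it into the tail-function difference $Q(\frac{-(x+y)/2}{\sigma/\sqrt{2}})-Q(\frac{1-(x+y)/2}{\sigma/\sqrt{2}})$; collecting the prefactor $\frac{1}{2\sqrt{\pi}\sigma}=\frac{1}{\sqrt{4\pi\sigma^2}}$ yields exactly the numerator of (\ref{equ_p_y_x_capa}). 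For the denominator I would instead swap the $u$- and $y'$-integrations (Fubini applies, the integrand being nonnegative), so the inner $y'$-integral becomes $\pr(\tY\in[0,1)\,|\,u)=Q(\frac{-u}{\sigma})-Q(\frac{1-u}{\sigma})$ and the surviving $u$-integral is precisely the denominator displayed in (\ref{equ_p_y_x_capa}).

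The main obstacle I anticipate is purely the bookkeeping around the completion-of-square step and the two distinct variance scales it produces: the cross term $(x-y)^2$ carries variance $2\sigma^2$ (hence the $\frac{1}{\sqrt{4\pi\sigma^2}}$ prefactor and the $\frac{(y-x)^2}{4\sigma^2}$ exponent), while the residual $u$-integral has variance $\sigma^2/2$ (hence the $\sigma/\sqrt{2}$ scaling inside the $Q$ arguments). Keeping those two scales straight, and recognizing that the denominator is most cleanly evaluated by Fubini rather than by the same completion, is the crux; everything remaining is substitution together with the standard Gaussian integral $\int \exp(-w^2/2)\,dw=\sqrt{2\pi}\,(Q(\cdot)\text{-difference})$.
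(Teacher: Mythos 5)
Your proposal is correct and takes essentially the same route as the paper's proof: both reduce the theorem to computing $p(y|x)$, marginalize over the hidden uniform variable $U$, exploit the conditional independence of $\tX$ and $\tY$ given $U$, and arrive at the identical ratio of Gaussian integrals whose closed-form evaluation gives~(\ref{equ_p_y_x_capa}). The only differences are bookkeeping: you cancel $\pr(\tX,\tY\in[0,1))$ once at the outset and work with untruncated densities, making the completion-of-square identity $(x-u)^2+(y-u)^2=2\bigl(u-\tfrac{x+y}{2}\bigr)^2+\tfrac{(x-y)^2}{2}$ and the Fubini step for the denominator explicit, whereas the paper assembles the same ratio by Bayes' rule from its previously derived truncated densities (\ref{equ_pdfX_condU}), (\ref{equ_pdfU_valid}), and (\ref{equ_pdfY_valid}) in (\ref{equ_p_y_x_proof}) and compresses exactly that Gaussian algebra into the phrase ``after simplifying the integrand.''
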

\begin{proof}
We complete the proof by finding the expression of the conditional density $p(y|x)$.
Indeed, we can write after marginalizing and applying Bayes' rule
\begin{equation}
\label{equ_p_y_x_proof}
p(y|x)=p_{Y|X,\tX,\tY \in [0,1)}(y|x)= \int_0^1 p(y,u|x)~du = 
\int_0^1 \frac{p(u)p(x|u)p(y|u)}{p(x)} du.
\end{equation}
In the above integral expression we used the fact that $p(x,y|u)=p(x|u)p(y|u)$ as a result of the model defined by~(\ref{equ_XYmodel}). 
In~(\ref{equ_p_y_x_proof}), both $p(x|u)$ and $p(y|u)$ are from~(\ref{equ_pdfX_condU}),
$p(x)$ is from~(\ref{equ_pdfY_valid}), and finally $p(u)=p_{U|\tX,\tY\in [0,1)}(u)$ 
is found in~(\ref{equ_pdfU_valid}), all after substituting $1$ to $N$.
After simplifying the integrand of~(\ref{equ_p_y_x_proof}), we get $p(y|x)$ as stated by~(\ref{equ_p_y_x_capa}). 
\end{proof}

\begin{corollary}
\label{cor_log_formula}
At high signal-to-noise ratio, i.e. $\sigma^2 \ll 1$ or equivalently $\gamma=\frac{1}{\sigma^2} \gg 1$, the maximal secrecy rate satisfies 
\begin{align}
I(X;Y) &=(1+\mathcal{O}(\tfrac{1}{\sqrt{\gamma}})) \cdot \frac{1}{2}\log\left( \frac{\gamma}{4\pi e}\right) ~+~\mathcal{O}\left((\tfrac{1}{\sqrt{\gamma}})^{\alpha}\right)\sim \frac{1}{2}\log\left( \frac{\gamma}{4\pi e}\right),~~~\forall \alpha \in ]0,1[.
\label{equ_capa_approx}
\end{align}
\end{corollary}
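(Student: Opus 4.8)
The plan is to evaluate the two differential entropies in (\ref{equ_IXY_0_1}) separately, showing that $h(Y)$ is negligible while $-h(Y\mid X)$ produces the leading term $\frac12\log(\gamma/(4\pi e))$. The first step is to rewrite the conditional density (\ref{equ_p_y_x_capa}) in the suggestive ``Gaussian $\times$ correction'' form
\[
p(y\mid x)=\frac{1}{\sqrt{4\pi\sigma^2}}\,e^{-\frac{(y-x)^2}{4\sigma^2}}\cdot\frac{f_{\sigma/\sqrt 2}\!\left(\frac{x+y}{2}\right)}{D(x)},
\qquad
D(x)=\int_0^1 \frac{1}{\sqrt{2\pi\sigma^2}}\,e^{-\frac{(x-u)^2}{2\sigma^2}}f_\sigma(u)\,du,
\]
where $f_\tau(w)=Q(-w/\tau)-Q((1-w)/\tau)$ is the difference function of Lemma~\ref{lem_Q_minus_Q} (here with $\tau=\sigma$ and $\tau=\sigma/\sqrt2$). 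This factorization is just the completion of the square in $u$ inside (\ref{equ_p_y_x_proof}); its point is that the leading factor is exactly the $\mathcal N(x,2\sigma^2)$ density, and $D(x)$ is simultaneously the denominator of (\ref{equ_p_y_x_capa}) and the numerator of $p(x)$ and $p(y)$ in (\ref{equ_px_tX_tY})--(\ref{equ_py_tX_tY}).

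Next I would compute $h(Y\mid X)=\int_0^1 p(x)\,h(Y\mid X=x)\,dx$ by splitting
\[
h(Y\mid X=x)=\underbrace{\Big[\tfrac12\log(4\pi\sigma^2)+\tfrac{\log e}{4\sigma^2}\,E\big[(Y-x)^2\mid x\big]\Big]}_{(A)}
\;-\;\underbrace{E\Big[\log f_{\sigma/\sqrt2}\!\big(\tfrac{x+Y}{2}\big)\,\Big|\,x\Big]+\log D(x)}_{(B)}.
\]
For $(A)$, integrating over $x$ gives $\tfrac12\log(4\pi\sigma^2)+\tfrac{\log e}{4\sigma^2}E[(Y-X)^2]$; since the truncation to $[0,1]$ cuts the Gaussian tail only on boundary strips of width $\mathcal O(\sigma)$, one has $E[(Y-X)^2]=2\sigma^2\big(1-\mathcal O(\sigma)\big)$, so $(A)$ integrates to $\tfrac12\log(4\pi e\sigma^2)+\mathcal O(\sigma)$. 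For $(B)$, Lemma~\ref{lem_Q_minus_Q}a--b gives $f_{\sigma/\sqrt2}\in[\tfrac12-\mathcal O(e^{-\gamma/4}),1]$ on $[0,1]$ and $f_{\sigma/\sqrt2}=1$ up to exponentially small terms away from the endpoints, while $D(x)\le 1$ is bounded below by a positive constant and equals $1$ up to exponentially small terms for bulk $x$; hence both summands of $(B)$ are $\mathcal O(1)$ and, after integration against $p(x)$, supported (up to exponential tails) on an $x$-set of measure $\mathcal O(\sigma)$, contributing only $\mathcal O(\sigma)$. Therefore $h(Y\mid X)=\tfrac12\log(4\pi e\sigma^2)+\mathcal O(1/\sqrt\gamma)=-\tfrac12\log(\gamma/(4\pi e))+\mathcal O(1/\sqrt\gamma)$, using $\sigma^2=1/\gamma$.

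For the output entropy I would use that, from (\ref{equ_py_tX_tY}), $p(y)=D(y)\big/\!\int_0^1 f_\sigma^2(t)\,dt$, where the denominator is $1-\mathcal O(1/\sqrt\gamma)$ by Lemma~\ref{lem_Q_minus_Q}d and $D(y)$ equals $1$ up to exponentially small terms for bulk $y$ and stays bounded in $[c,1]$ near the endpoints. Thus $p(y)=1+\mathcal O(1/\sqrt\gamma)$ away from the boundary and bounded on the $\mathcal O(\sigma)$-wide strips; since the uniform law on $[0,1]$ has zero differential entropy and $-p\log p$ is bounded and differs from $0$ only on a set of measure $\mathcal O(\sigma)$ (plus the $\mathcal O(1/\sqrt\gamma)$ bulk deviation), we get $h(Y)=\mathcal O(1/\sqrt\gamma)$. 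Subtracting, $I(X;Y)=h(Y)-h(Y\mid X)=\tfrac12\log(\gamma/(4\pi e))+\mathcal O(1/\sqrt\gamma)$. Finally I would repackage this as $(1+\mathcal O(1/\sqrt\gamma))\tfrac12\log(\gamma/(4\pi e))+\mathcal O((1/\sqrt\gamma)^\alpha)$ for any $\alpha\in\,]0,1[$, which is legitimate because the $\mathcal O(1/\sqrt\gamma)$ remainder is absorbed into $\mathcal O((1/\sqrt\gamma)^\alpha)$, and conclude the asymptotic equivalence $I(X;Y)\sim\tfrac12\log(\gamma/(4\pi e))$ since the remainder is $o(\log\gamma)$.

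The main obstacle is the uniform control of the boundary strips in $h(Y\mid X)$: one must guarantee that neither the truncated second moment $E[(Y-x)^2\mid x]$ nor the correction $\log D(x)$ degenerates as $x\to 0,1$, so that the conditional entropy stays of the form $-\tfrac12\log(\gamma/(4\pi e))+\mathcal O(1)$ rather than developing a term that competes with the $\log\gamma$ leading order. This is precisely where Lemma~\ref{lem_Q_minus_Q}a is essential: the endpoint value $f_\sigma(0)=\tfrac12-\mathcal O(e^{-\gamma/2})$ keeps $f_{\sigma/\sqrt2}$ (and hence $D$) bounded away from $0$, ensuring the boundary corrections are genuinely $\mathcal O(1)$ over a set of measure $\mathcal O(\sigma)$ and never accumulate to disturb the leading asymptotics.
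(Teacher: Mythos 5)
Your proposal is correct, and at the structural level it matches the paper: the same decomposition $I(X;Y)=h(Y)-h(Y|X)$ from Theorem~\ref{th_capa}, the same rewriting of (\ref{equ_p_y_x_capa}) as a $\mathcal{N}(x,2\sigma^2)$ density times the correction $f_{\sigma/\sqrt{2}}\bigl(\tfrac{x+y}{2}\bigr)/D(x)$ (your $D$ is exactly the paper's $I_6$), and the same input from Lemma~\ref{lem_Q_minus_Q}. Where you genuinely depart from the paper is in the error control, which is precisely the part the paper omits as ``cumbersome calculus.'' The paper introduces an intermediate scale $\delta=(1/\sqrt{\gamma})^{\alpha}$, splits $[0,1]$ into $[0,\delta]\cup[\delta,1-\delta]\cup[1-\delta,1]$, and integrates separately on the bulk (where $I_6$ and $f_{\sigma/\sqrt{2}}$ equal $1$ up to exponentially small terms) and on the strips; that splitting is the origin of the $\mathcal{O}((1/\sqrt{\gamma})^{\alpha})$ remainder in (\ref{equ_capa_approx}), and it cannot do better because the method needs $\delta^2\gamma\to\infty$, i.e. $\delta\gg\sigma$. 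You avoid any intermediate scale: expanding $-\log p(y|x)$ into its exact Gaussian quadratic part plus two \emph{bounded} corrections $\log f_{\sigma/\sqrt{2}}$ and $\log D$, you evaluate the quadratic part via the truncated second moment $E[(Y-X)^2]=2\sigma^2(1+\mathcal{O}(\sigma))$ and dispose of the corrections by $L^1$ bounds (uniformly $\mathcal{O}(1)$, exponentially decaying at scale $\sigma$ away from the endpoints, hence integrating to $\mathcal{O}(\sigma)$). This buys a strictly sharper conclusion: an additive remainder $\mathcal{O}(1/\sqrt{\gamma})$ with no $\log\gamma$ factor, i.e. the $\alpha=1$ endpoint unreachable by the paper's splitting, from which the stated form follows trivially. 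Two points to tighten. First, the grouping in your display for $h(Y\mid X=x)$ is ambiguous; the correct identity is $h(Y\mid X=x)=(A)-E\bigl[\log f_{\sigma/\sqrt{2}}\bigl(\tfrac{x+Y}{2}\bigr)\mid x\bigr]+\log D(x)$, with opposite signs on the two corrections --- immaterial here, since you only use absolute bounds on each. Second, the linchpin lower bound $D(x)\ge c>0$ deserves its one line: $f_\sigma$ is minimized on $[0,1]$ at the endpoints (it increases on $[0,\tfrac12]$ and decreases on $[\tfrac12,1]$), so Lemma~\ref{lem_Q_minus_Q}a gives $f_\sigma\ge\tfrac12-o(1)$ on $[0,1]$, whence $D(x)\ge(\tfrac12-o(1))\int_0^1\tfrac{1}{\sqrt{2\pi\sigma^2}}e^{-(x-u)^2/(2\sigma^2)}\,du=(\tfrac12-o(1))f_\sigma(x)\ge\tfrac14-o(1)$; you correctly single this out as what keeps the boundary strips from contaminating the $\log\gamma$ term, but the monotonicity step is not literally contained in Lemma~\ref{lem_Q_minus_Q} and should be stated.
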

\begin{proof} The proof is based on a Babylonian approach with heavy calculus.
Let us first give a sketch on how the limit is guessed. 
By applying Lemma~\ref{lem_Q_minus_Q} and some extra algebra, when $\gamma \gg 1$, 
we get that $p(x) \rightarrow 1$,
$p(y) \rightarrow 1$, 
and $p(y|x) \rightarrow \frac{1}{\sqrt{4\pi\sigma^2}} e^{-\frac{(y-x)^2}{4\sigma^2}}$, in (\ref{equ_px_tX_tY}), (\ref{equ_py_tX_tY}), and (\ref{equ_p_y_x_capa}) respectively,
for $x, y \in ]0,1[$. 
Then, the differential entropy $h(Y) \rightarrow 0$, 
$h(Y|X) \rightarrow h(\mathcal{N}(0,2\sigma^2))
=\frac{1}{2}\log(4\pi e \sigma^2)$, 
so the maximal secrecy rate satisfies $I(X;Y)=h(Y)-h(Y|X) \to \frac{1}{2}\log\left( \frac{\gamma}{4\pi e}\right)$. This ends a simple but a clear sketch on how all involved densities and the maximal mutual information are converging at high SNR.

The denominator of $p(x)$ and $p(y)$ is given in Lemma~\ref{lem_Q_minus_Q} at high SNR as $1-\mathcal{O}(\tfrac{1}{\sqrt{\gamma}})$. The numerator of $p(y|x)$ is $f_{\sigma/\sqrt{2}}((x+y)/2)=1-\mathcal{O}(\exp(-\min^2 \cdot\gamma))$ for $(x+y)/2 \in ]0,1[$, where $\min$ is $\min((x+y)/2,1-(x+y)/2)$ from Lemma~\ref{lem_Q_minus_Q}. Also, $f_{\sigma/\sqrt{2}}(0)=f_{\sigma/\sqrt{2}}(1)=\tfrac{1}{2}-\mathcal{O}(\exp(-\gamma))$. 
The last item to solve to get the result of this corollary is the integral in the numerator of $p(x)$, the numerator of $p(y)$, and the denominator of $p(y|x)$. Define the following integral 
\begin{equation}
I_6=I_6(x)=I_6(1-x)=\int_0^1 \frac{1}{\sqrt{2\pi\sigma^2}} ~e^{-\tfrac{(x-u)^2}{2\sigma^2}} \left[ Q\left(\tfrac{-u}{\sigma}\right) - Q\left( \tfrac{1-u}{\sigma}\right)\right] ~du, ~~~x \in [0,1]. 
\end{equation}
The integral $I_6$ has no closed-form expression. Firstly, we study $I_6(x)$ at $x=0$ (identical at $x=1$).
\begin{align}
I_6(0) &=\int_0^1 \frac{1}{\sqrt{2\pi\sigma^2}} ~e^{-\tfrac{u^2}{2\sigma^2}} \left[ 1-Q\left(\tfrac{u}{\sigma}\right) - Q\left( \tfrac{1-u}{\sigma}\right)\right] ~du \nonumber\\
&= \tfrac{1}{2}-Q\left(\tfrac{1}{\sigma}\right)-\int_0^1 \frac{1}{\sqrt{2\pi\sigma^2}} ~e^{-\tfrac{u^2}{2\sigma^2}} Q\left(\tfrac{u}{\sigma}\right)~du-\int_0^1 \frac{1}{\sqrt{2\pi\sigma^2}} ~e^{-\tfrac{u^2}{2\sigma^2}} Q\left(\tfrac{1-u}{\sigma}\right)~du \nonumber\\
&= \tfrac{1}{2}-\mathcal{O}(e^{-\tfrac{\gamma}{2}})-\tfrac{1}{8}
-\tfrac{1}{2} Q^2\left(\tfrac{1}{\sigma}\right)-\mathcal{O}(e^{-\tfrac{\gamma}{4}})~=~\tfrac{3}{8}-\mathcal{O}(e^{-\tfrac{\gamma}{4}}).
\end{align}
Secondly, we study $I_6(x)$ for $x \in ]0,1[$. We use calculus tools similar to those used in the proofs of Lemma~\ref{lem_Q_minus_Q} and Proposition~\ref{prop_high_snr} to obtain
\begin{align}
I_6(x) &=Q\left(\tfrac{-x}{\sigma}\right)-Q\left(\tfrac{1-x}{\sigma}\right)-\int_0^1 \frac{1}{\sqrt{2\pi\sigma^2}} ~e^{-\tfrac{(x-u)^2}{2\sigma^2}} Q\left(\tfrac{u}{\sigma}\right)~du-\int_0^1 \frac{1}{\sqrt{2\pi\sigma^2}} ~e^{-\tfrac{(x-u)^2}{2\sigma^2}} Q\left(\tfrac{1-u}{\sigma}\right)~du \nonumber\\
&= f_{\sigma}(x) - \mathcal{O}(e^{-\tfrac{x^2}{4\sigma^2}}) 
- \mathcal{O}(e^{-\tfrac{(1-x)^2}{4\sigma^2}})  
~=~1-\mathcal{O}(e^{-\min^2(x,1-x) \cdot \tfrac{\gamma}{4}}). \label{equ_I6_one}
\end{align}
Now we are ready to transform the expression of $I(X;Y)$ in the small-noise regime given that the behavior of all densities is solved:
\[
p(x)=\frac{I_6(x)}{1-\mathcal{O}(\tfrac{1}{\sqrt{\gamma}})},~~
p(y)=\frac{I_6(y)}{1-\mathcal{O}(\tfrac{1}{\sqrt{\gamma}})},~~
\text{and}~~p(y|x)=\frac{e^{-\tfrac{(y-x)^2}{4\sigma^2}}}{\sqrt{4\pi\sigma^2}} \cdot \frac{f_{\sigma/\sqrt{2}}((x+y)/2)}{I_6(x)}. 
\]
For $x,y \in [0,1]$, we distinguish between the behavior of $I_6(x)$, $I_6(y)$,
and $f_{\sigma/\sqrt{2}}((x+y)/2)$ near the extremal points $0$ and $1$ and inside the interval. Hence, we decompose the interval as $[0,1]= [0,\delta] \cup [\delta, 1-\delta] \cup [1-\delta, 1]$ for integration. The parameter $\delta$ should vanish at high SNR and should guarantee that $I_6$ approaches $1$, then we find from~(\ref{equ_I6_one}) that $\delta^2 \gamma$ should go to $0$, which leads to
$\delta=(\tfrac{1}{\sqrt{\gamma}})^{\alpha}$, where $0 < \alpha < 1$. 
Finally, (\ref{equ_IXY_0_1}) is decomposed via this $\delta$ into
\begin{align}
I(X;Y) &=-\int_{\delta}^{1-\delta} p(y) \log(p(y)) ~dy -2 \int_0^{\delta} p(y) \log(p(y)) ~dy \nonumber \\
&+ \iint_{x,y \in [\delta, 1-\delta]} p(x) p(y|x) \log(p(y|x)) +
\iint_{x,y \notin [\delta, 1-\delta]} p(x) p(y|x) \log(p(y|x))\\
&= (1+\mathcal{O}(\tfrac{1}{\sqrt{\gamma}})) \cdot \frac{1}{2}\log\left( \frac{\gamma}{4\pi e}\right) ~+~\mathcal{O}(\delta).
\end{align} 
The cumbersome calculus details proving the last equality are not included for the sake of space. 
\end{proof}
The Gaussian differential entropy (\ref{equ_capa_approx}) is very close to $I(X;Y)$ above 2 bits per photon and becomes very accurate beyond 3 bits per photon 
where it coincides with the red upper envelope in Figure~\ref{fig_mutinfo_soft} at a high signal-to-noise ratio. The double variance $2\sigma^2$ in (\ref{equ_capa_approx}), originally found in (\ref{equ_p_y_x_capa}), comes from the superposition of the variances of $Z_1$ and $Z_2$ in the system model defined by (\ref{equ_XYmodel}). After canceling $U$, the model becomes $\tY=\tX+Z_1-Z_2$. $\tX$ and $Z_1$ are correlated, making the density expression relatively complicated when conditioning on $X$. In the small-noise regime, this correlation fades away, and the variance $2\sigma^2$ of the total additive Gaussian noise $Z_1-Z_2$ dominates the mutual information as in (\ref{equ_capa_approx}). At low and very low signal-to-noise ratios, one should use exact density expressions from Theorem~\ref{th_capa} and proceed via numerical integration to get exact values of $I(X;Y)$ and the corresponding SNR limits if the user accepts to apply a relatively low coding rate which is not the trend in TE-QKD where coding rates above $1/2$ are preferred which places us in the moderate and the high SNR region.

\section{Key-Reconciliation Codes\label{sec_codes}}
Following the complete characterization in Sections~\ref{sec_soft_output}-\ref{sec_capacity} of the time-entanglement QKD channel model described in Section~\ref{sec:Model}, we now introduce error-correcting codes to bring the error-rate performance as close as possible to the information theoretic limits corresponding to maximal achievable rates.

\subsection{Reed-Solomon Codes}
We consider the famous family of Reed-Solomon codes with an application to a frame of $N=2^m$ bins, i.e. $m$ coded bits per photon. In order to chose a high enough error-correction capacity, an RS code over $\F_q$ is considered, where $q$ is large enough. 
Each finite field element corresponds to $\log_2(q)/m$ photons. For simplicity, 
assume that $q=2^{\ell m}$, for some positive integer~$\ell$. 
The RS code has length $n=q-1$ (primitive) and dimension $k=n-2t$, 
so the targeted rate is $m \times \frac{k}{n}$ information bits per photon. 
One codeword of this $\mathcal{C}[n,k,t]_q$ RS code requires the transmission 
of a total of $n \times \log_2(q)/m$ photons to Alice and $n \times \log_2(q)/m$ photons to Bob, all with valid frames. 
After receiving the $n \times \log_2(q)/m$ valid frames, Alice converts the $n \times \log_2(q)$ bits received on the quantum channel into a length-$n$ word denoted by $c+e$, where $c \in \mathcal{C}$ and
$e \in \F_q^n$. Similarly, Bob converts his $n \times \log_2(q)$ received bits into $c+e'$,
where $e' \in \F_q^n$. In the hard-output channel model of Section~\ref{sec:Model}, 
$c+e$ is written at the input $\hX$ and $c+e'$ is read from the output $\hY$. 
On the public channel, Alice sends to Bob the syndrome $s=(c+e)H^t$, $s \in \F_q^{n-k}$, where $H$ is the parity-check matrix of $\mathcal{C}$. 
Given $s$ and given $c+e'$, the reconciliation performed by Bob is equivalent to finding Alice's word $c+e$. Bob proceeds as follows:
\begin{itemize}
\item Compute a syndrome $s'=(c+e')H^t$.
\item Feed $s'-s$ to an algebraic (Berlekamp-Massey \cite{Blahut2003}) decoder to find $e'-e$.
\item Subtract the error $e'-e$ from Bob's word to get $c+e'-(e'-e)=c+e$ the $\F_q^n$ word possessed by Alice. Replace all subtractions by additions in usual finite fields of characteristic $2$. 
\end{itemize}
The performance of RS $\mathcal{C}[n=63,k=43,t=10]_{q=64}$ code
is shown in Figure~\ref{fig_RS_perf}. One codeword requires the transmission
of a total of $126$ photons, where one field element carries two photons. 
The results show a large gain, e.g., about 58 dB of gain
for a bit error-rate $P_{eb}=10^{-5}$ after reconciliation. 

\begin{figure}[h]
\centering
\includegraphics[width=9cm,angle=270]{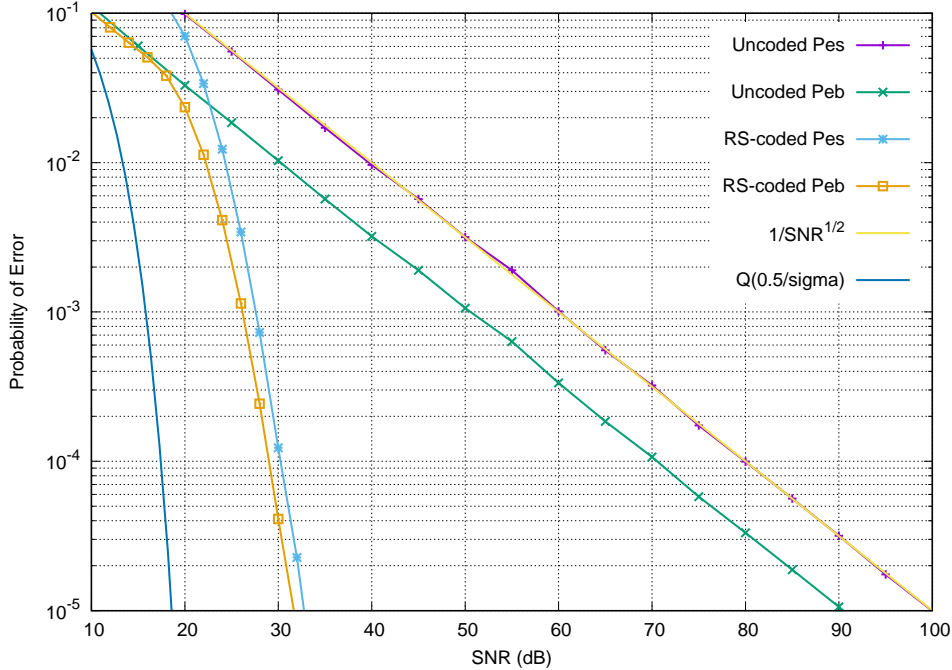}
\caption{Performance of the RS code $[63,43,t=10]_{64}$ on the hard-output time-entanglement QKD channel, for $N=8$ bins per frame, transmitting $2.05$ information bits per photon.}
\label{fig_RS_perf}
\end{figure}

The analysis of the algebraic decoder is easy thanks to its bounded-distance decoding in the Hamming space. A decoding error occurs each time the channel adds more than $t$ errors in $\F_q$. A simple union bound is obtained by summing from $t+1$ to $n$ errors. 
We proceed in the following steps to establish this bound for the RS code:\\
a) The uncoded symbol error probability over the TE-QKD channel is $P_e(\gamma)=\tfrac{2}{\sqrt{\pi}}(1-\tfrac{1}{N})\tfrac{1}{\sqrt{\gamma}}$.\\
b) For the RS code, the input probability of error per finite-field element is
$P_{in}(\gamma)=1-(1-P_e(\gamma))^{\ell}$.\\
c) The bound on the probability of error in $\F_q$ after decoding becomes
\begin{equation}
\label{equ_PeRS}
P_{eRS}(\gamma)=\sum_{i=t+1}^n \tfrac{i}{n}{n \choose i} P^i_{in}(\gamma) (1-P_{in}(\gamma))^{n-i}.
\end{equation}
d) The symbol (per photon) error probability after decoding is then 
$P_{eOut}(\gamma)=1-(1-P_{eRS}(\gamma))^{1/\ell}$.\\  
e) The probability of error per bit after Reed-Solomon decoding, 
given a Gray labeling of the bins, 
is well estimated by $P_{ebRS}(\gamma) = \tfrac{1}{\log_2(N)} P_{eOut}(\gamma)$.\\

The probability or error $P_{ebRS}$ obtained from (\ref{equ_PeRS}) perfectly fits the Monte Carlo method in the area where this method is tractable on a computer, i.e. for error rates in the interval $[10^{-7}, 10^{-1}]$. At $P_{ebRS}=10^{-10}$, the coding gain over the uncoded probability of error per bit is 158 dB! 
Such a huge gain is explained by the diversity order of the TE-QKD channel. 
The diversity order is defined as $\lim_{\gamma \rightarrow \infty} \tfrac{-\log(P_e)}{\log(\gamma)}$ \cite[Chapters~13-14]{Proakis2008}. From Proposition~\ref{prop_pe_uncoded} we know that the TE-QKD has a diversity order of $\tfrac{1}{2}$, it behaves like a half-diversity Nakagami fading channel. The error-correcting code increases the diversity order which is equivalent to increasing the slope of $P_e(\gamma)$. An additive Gaussian noise channel without fading has infinite diversity, with or without coding, making all curves look parallel. 
In presence of fading, a high diversity converts the channel into a Gaussian channel \cite{Boutros1998}. In practice, a diversity order beyond 8 could be barely distinguished from the local slope of $e^{-\gamma}$ on a Gaussian channel. In our case, from (\ref{equ_PeRS}), we deduce that the diversity order after algebraic RS decoding is $(t+1)/2$.
There is no asymptotic coding on the TE-QKD channel. The coding gain increases if measured at a lower probability of error.

\subsection{Binary BCH Codes \label{sec_bch}}
The TE-QKD channel does not generate error bursts. Errors are independent and the most common event is one erroneous bit per photon before decoding. In other words, the binary-burst error-correcting capability of Reed-Solomon codes is not exploited. Hence, we suggest to utilize a binary BCH code of the same binary length as the RS$[63,43]_{64}$, which is $63\times 6 = 378$ binary digits. We start from a primitive length of 511 and shorten down to 378. At $t=13$ the binary BCH code has a dimension $k=261$. This BCH$[378,261,t=13]_2$ code yields a diversity order $(t+1)/2=7$ better than the 5.5 order of the RS code shown in the previous section. The number of information bits per photon is $261/378\times 3=2.07$ bits for $N=8$ bins per frame.\\
Without adding any extra figure to this sub-section, the Monte Carlo simulation and the analytical bound show that the binary BCH$[378,261,t=13]$ code beats the RS$[63,43]_{64}$ code by 3 dB in signal-to-noise ratio at $P_{ebRS}=P_{ebBCH}=10^{-5}$. To get the coding gain at a lower probability of error, we propose the following very tight union bound:\\
a) The uncoded symbol error probability over the TE-QKD channel is $P_e(\gamma)=\tfrac{2}{\sqrt{\pi}}(1-\tfrac{1}{N})\tfrac{1}{\sqrt{\gamma}}$. Below $10^{-1}$ a maximum of one bit error occurs in a block of $m=\log_2(N)$ coded bits thanks to Gray labeling. There are $n/m$ such blocks per BCH codeword involving individual binary errors.\\
b) The bound on the probability of error in $\F_2$ after BCH decoding becomes
\begin{equation}
\label{equ_PebBCH}
P_{ebBCH}(\gamma)=\sum_{i=t+1}^{n/m} \tfrac{i}{n}{n/m \choose i} P^i_e(\gamma) (1-P_e(\gamma))^{n/m-i}.
\end{equation}
At $P_{ebRS}=P_{ebBCH}=10^{-10}$, the binary BCH$[378,261,t=13]$ code beats the RS$[63,43]_{64}$ code by 5~dB. This value corresponds to a 163~dB of BCH coding gain with respect to the uncoded photons at $N=8$ bins per frame. 
Notice that the reconciliation at Bob's side for BCH codes (binary or non-binary) is identical to the reconciliation described in the previous section for Reed-Solomon codes where the syndrome $s'-s$ is fed to a Berlekamp-Massey decoder. 
\subsection{Graph-Based LDPC Codes \label{sec_ldpc}}
The big impact of LDPC codes on the performance of polarization-based QKD systems was already demonstrated in~\cite{Elkouss2009} for the reconciliation of discrete random variables, with a BSC channel model. Low-density parity-check codes \cite{Gallager1963}\cite{Richardson2008} are very flexible in terms of length, coding rate, and decoding methods. As usual, the LDPC code parity-check matrix is the adjacency matrix of a bipartite Tanner graph with $n$ variable nodes and $n-k$ check nodes, assuming that the graph is $(d_v, d_c)$-regular. For finite fields $\F_q$ with $q>2$, non-zero elements of the adjacency matrix are replaced by elements from $\F_q \setminus \{0\}$. The standard method for decoding LDPC codes is belief propagation (BP), i.e. iterative probabilistic decoding. Codes over a large field $F_q$ or a large ring $\Z/q\Z$ could be considered \cite{Boutros2018} in order to minimize the loss during the symbol-to-bit soft values conversion. It is also possible to use joint local-global LDPC codes with optimized bin mapping to achieve good performance \cite{codes:YangSCWD19} or apply multilevel-coding as in \cite{codes:ZhouWW13}, although these papers consider a different QKD channel model. In this paper, we show the impact of LDPC codes on TE-QKD with a $(3,9)$-regular binary LDPC code only. The coding rate is $2/3$ guaranteeing $2$ exchanged bits per photon when the frame has 8 bins, however we consider a short length $n=384$ ($64 \times 6$) comparable to the RS and BCH codes given in the previous sub-sections, and a longer code with $n=9999$ to illustrate a performance close enough to Shannon limit.\\

The symbol/bin APP is found via (\ref{equ_app}), where $APP(i)=APP(\hX=i)$ is the {\em a posteriori} probability of bin number $i$, $i \in \Z_N$. Then the APP of binary digit $b_{\ell}$, where $\ell \in \Z_m$, $m=\log_2(N)$, is derived by the following marginalization 
\begin{equation}
\label{equ_binary_app}
APP(b_{\ell}) = \sum_{i \in \Z_N ~:~ b_{\ell}} APP(\hX=i).
\end{equation}
The above marginalization depends on the type of binary labeling. Our paper is restricted to $N$ bins per frame with a Gray labeling of $\log_2(N)$ bits per bin. Figure~\ref{fig_ldpc_perf} shows the bit error-rate versus $\gamma$ for the binary LDPC 
code on the TE-QKD soft-output channel at $n=384$ bits and $n=9999$ bits. 
They respectively gain 12 and 16 dB with respect to the BCH$[378,261]$ code, at a bit error 
probability of~$10^{-5}$. If compared to the uncoded TE-QKD, the coding gain is 73 dB and 77 dB respectively. At length $n=9999$, the LDPC code is on top of the Shannon limit for a TE-QKD hard-output channel ($\gamma_{limit}=12.61$ dB) and is 2 dB only from the Shannon limit of the soft-output TE-QKD channel ($\gamma_{limit}=10.45$ dB). We see no reason for using longer LDPC codes to catch an extra 1-2 dB given that the total coding gain with respect to the no-coding case already equals 77 dB!

\begin{figure}[h]
\centering
\includegraphics[width=9cm,angle=270]{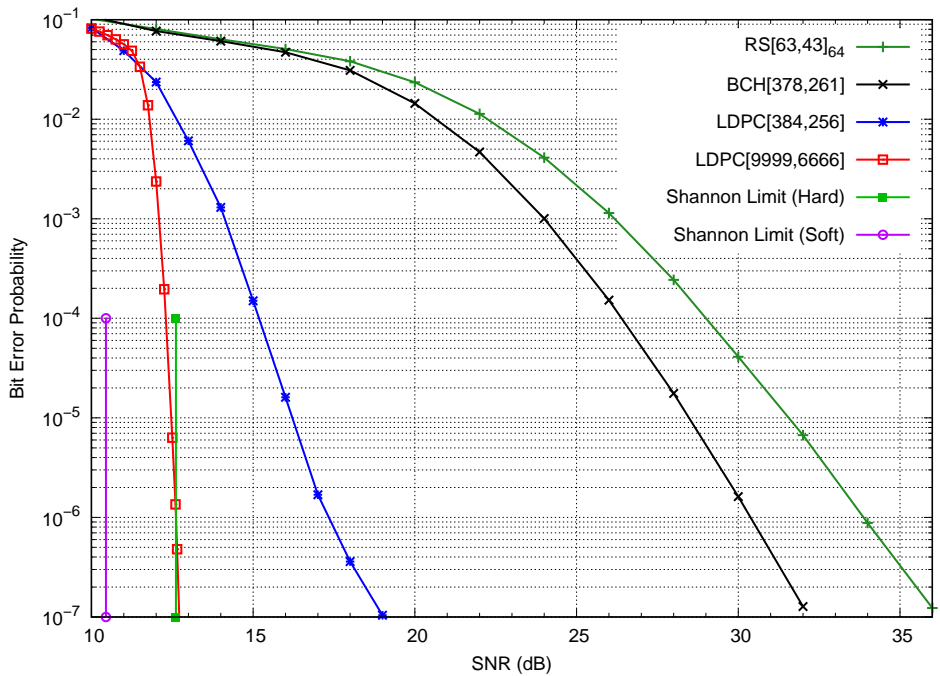}
\caption{Performance of the $(3,9)$-regular binary LDPC code at length $n=384$ bits and $n=9999$ bits on the soft-output time-entanglement QKD channel, for $N=8$ bins per frame, transmitting $2.0$ information bits per photon.}
\label{fig_ldpc_perf}
\end{figure}

In practice, if a lab system implementation requires a less complex expression for $APP(\hX=i)$ without the erfc()/Q() function and without integration, (\ref{equ_app}) can be simplified by assuming that the Gaussian density has the effect of a Dirac impulse at small $\sigma$ and using the $\propto$ symbol (proportional to) since the denominator does not depend on the index $i$, we get:
\begin{align*}
APP(i) &\propto
\int_0^N \frac{1}{\sqrt{2\pi\sigma^2}} e^{-\frac{(y-u)^2}{2\sigma^2}}\cdot \left[Q\left(\frac{i-u}{\sigma}\right)-Q\left(\frac{i+1-u}{\sigma}\right)\right] \,du \\
& \propto \left[ Q\left(\frac{i-y}{\sigma}\right)-Q\left(\frac{i+1-y}{\sigma}\right) \right].
\end{align*}
Then, depending on the sign of the arguments $i-y$ and $i+1-y$, we approximate $Q(x)$ by 
$\tfrac{1}{2}e^{-\tfrac{x^2}{2}}$ (if $x\ge 0$) and by $1-\tfrac{1}{2}e^{-\tfrac{x^2}{2}}$ (if $x < 0$). Let $j=\lfloor Y \rfloor$ be the bin position of $Y$ on Bob's side, i.e. $Y \in [j, j+1[$. The simplified APP expressions become:
\begin{align}
\text{If}~i=j,~~APP(i) &\propto \left[1-\tfrac{1}{2}e^{-\tfrac{(y-i)^2}{2\sigma^2}}
-\tfrac{1}{2}e^{-\tfrac{(y-i-1)^2}{2\sigma^2}}\right], \label{equ_app1_approx}\\
\text{If}~i\ne j,~~APP(i) &\propto sign(j-i) \cdot \left[\tfrac{1}{2}e^{-\tfrac{(y-i-1)^2}{2\sigma^2}}
-\tfrac{1}{2}e^{-\tfrac{(y-i)^2}{2\sigma^2}}\right]. \label{equ_app2_approx}
\end{align}
When (\ref{equ_app1_approx})-(\ref{equ_app2_approx}) are utilized in the BP decoder of the binary LDPC code over the TE-QKD soft-output channel, the loss is limited to 0.25-0.30 dB with respect to the exact expression (\ref{equ_app}). This is a minuscule loss when dealing with coding gains above 50 dB.

Notice that we are not showing a performance of the LDPC code over a hard-output channel. Indeed, optimal BP decoding is identical whether the channel output is soft or not, i.e., the BP decoder is the same decoder on both a Gaussian-like channel and a BSC-like channel. 
The gap between hard and soft is about 8.5 dB for the LDPC$[384,256]$ and about 4 dB
for the LDPC$[9999,6666]$ at a bit error rate of $10^{-5}$. Suppose the system implementation possesses an optimal BP decoder, but the exact photon position is unavailable; only the bin number is available. In such a case, the lab implementation is forced to use LDPC codes on a hard-output channel, and the binary digits APP expression (\ref{equ_binary_app}) becomes
\begin{equation}
APP(b_{\ell}) \propto \sum_{i \in \Z_N ~:~ b_{\ell}} \hpi_i \times p_{i,j},
\end{equation}
where $j=\lfloor Y \rfloor$, $\hpi_i$ is given by (\ref{equ_hpi}) or (\ref{equ_hpi0_small_sigma})-(\ref{equ_hpii_small_sigma}) at small $\sigma$, and $p_{i,j}$ is given by (\ref{equ_pij}) or (\ref{equ_p01_small_sigma})-(\ref{equ_p12_small_sigma}) in the small $\sigma$ regime. Coding theorists and practitioners could also use convolutional codes, turbo codes, polar codes, and other binary or non-binary algebraic codes with short or moderate length to achieve large coding gains on the TE-QKD channel.  
\subsection{A summary of capacity limits at different frame sizes}
We complete the current section by a table summarizing important information theoretical limits on the time-entanglement QKD channel, with both hard and soft output. 
Shannon limit in terms of SNR is the value of the non-normalized signal-to-noise ratio $\gamma$ such that mutual information is equal to the targeted information exchange rate, $I(\hX;\hY)=\tfrac{k}{n} \log_2(N)$ for a hard output and $I(\hX;Y)=\tfrac{k}{n} \log_2(N)$ for a soft output. Table~\ref{tab_limits} has seven columns with parameters covering 8 bins per frame up to 64 bins per frame. The last two rows correspond to SNR and standard deviation values achieved by the BCH and the LDPC codes as found in sub-sections~\ref{sec_bch} and~\ref{sec_ldpc}. 

\begin{table}[!h]
    \centering
      \caption{Information theoretical (Shannon) limits for TE-QKD.}
    \label{tab_limits}
    \begin{tabular}{|c|c|c|r|l|r|l|} \hline
      N   &  Bits & $R=k/n$ & SNR limit & $\sigma/N$ & SNR limit & $\sigma/N$ \\
     bins per frame & per photon & code rate & hard &  hard & soft & soft \\ \hline \hline
     8	&2.0	&2/3	&12.61 dB	&0.029269	&10.45 dB	&0.037533 \\ \hline
     16	&3.0	&3/4	&13.29 dB	&0.013532	&10.85 dB	&0.017922 \\ \hline
     32	&3.0	&3/5	&3.88 dB	&0.019992	&3.46 dB	&0.020982 \\ \hline
     32	&4.0	&4/5	&13.61 dB	&0.0065215	&11.04 dB	&0.0087670 \\ \hline
     64	&4.0	&2/3	&4.01 dB	&0.0098474	&3.58 dB	&0.010347  \\ \hline
     64	&5.0	&5/6	&13.77 dB	&0.0032012	&11.13 dB	&0.0043383  \\ \hline \hline
     8   &2.0	&2/3	&28.49 dB   &0.0047034 &           &            \\
     BCH, n=378 & & & $P_{eb}=10^{-5}$ & achieved & & \\ \hline
     8   &2.0	&2/3	&           &           &12.47 dB   &0.029745   \\
     LDPC, n=9999 & & & & & $P_{eb}=10^{-5}$ & achieved  \\ \hline
     \end{tabular}
\end{table}

The signal-noise ratio soft-decoding limits listed in Table~\ref{tab_limits} appear to be close to two values, one SNR around 10-11 dB and a lower SNR around 3.5 dB. 
The hard-decoding limits are higher than soft-decoding limits, because $I(\hX;\hY) \le I(\hX;Y)$, the gap depends on the frame size and the coding rate. Of course, the hard-soft gap vanishes at small coding rates (below $1/2$) and increases at high coding rates when 
mutual information approaches the asymptote $\log_2(N)$.\\
The two typical values of soft-decoding SNR limits are explained or interpreted for small $\sigma$ via (\ref{equ_capa_approx}):
\begin{equation}
\label{equ_gamma_and_backoff}
\frac{1}{2}\log\left( \frac{\bgamma}{4\pi e}\right)  = \log_2(N)-b, 
\end{equation}
where $\bgamma=N^2\gamma$ and $b$ is a backoff value. Here, $b=1$ bit or $b=2$ bits in Table~\ref{tab_limits}. Then, solving (\ref{equ_gamma_and_backoff}) yields $\gamma=(4\pi e)/2^{2b}$. We get $\gamma=9.31$ dB for $b=1$ and $\gamma=3.29$ for $b=2$.
The difference with the values in the 6th column of Table~\ref{tab_limits} is due to $I(\hX;Y)$ going away from the envelope $I(X;Y)$ to follow its own asymptote. 
We hope that SNR limits given in Table~\ref{tab_limits} will be useful to physicists and  coding theorists working in this QKD field. 
\section{Conclusions}
We focused on the time entanglement-based QKD when the photon arrival detectors suffer from time jitter. We presented a rigorous analysis 
of secret key information rates and proposed and tested several codes for information reconciliation to approach the maximum secret key rates. These achievable secret key rates are much higher than the maximum achievable by polarization entanglement-based QKD. However, practical photon detectors suffer from other impairments, e.g., dark currents and downtime, which may cause further rate loss. These impairments should be a subject of future work.
\bibliographystyle{ieeetr}
\bibliography{bibio}

\end{document}